\title{Gibbs and Boltzmann Entropy in Classical and Quantum Mechanics}
\author{
Sheldon Goldstein\footnote{Departments of Mathematics, Physics, and Philosophy, 
	Rutgers University, Hill Center, 110 Frelinghuysen Road, Piscataway, 
	NJ 08854-8019, USA. Email: oldstein@math.rutgers.edu},\ 
Joel L.~Lebowitz\footnote{Departments of Mathematics and Physics, 
	Rutgers University, Hill Center, 110 Frelinghuysen Road, Piscataway, 
	NJ 08854-8019, USA. Email: lebowitz@math.rutgers.edu},\ 
Roderich Tumulka\footnote{Mathematisches Institut,
     Eberhard-Karls-Universit\"at, Auf der Morgenstelle 10, 72076
     T\"ubingen, Germany. Email: roderich.tumulka@uni-tuebingen.de},\ and
Nino Zangh\`\i\footnote{Dipartimento di Fisica, Universit\`a di Genova, and
	Istituto Nazionale di Fisica Nucleare (Sezione di Genova),
	Via Dodecaneso 33, 16146 Genova, Italy. Email: zanghi@ge.infn.it}
}
\date{June 2, 2019}
\newcommand{\Hilbert}{\mathscr{H}}
\newcommand{\Kilbert}{\mathscr{K}}
\newcommand{\EEE}{\mathbb{E}}
\newcommand{\RRR}{\mathbb{R}}
\newcommand{\SSS}{\mathbb{S}}
\newcommand{\ZZZ}{\mathbb{Z}}
\newcommand{\pr}[1]{| #1 \rangle \langle #1 |}
\DeclareMathOperator{\tr}{tr}
\newcommand{\vq}{\boldsymbol{q}}
\newcommand{\vv}{\boldsymbol{v}}
\newcommand{\vomega}{\boldsymbol{\omega}}
\DeclareMathOperator{\vol}{\mathrm{vol}}
\newcommand{\kk}{k} 
\newcommand{\sys}{\mathscr{S}}
\newcommand{\can}{\mathrm{can}}
\newcommand{\mc}{\mathrm{mc}}
\newcommand{\eq}{\mathrm{eq}}
\newcommand{\G}{\mathrm{G}}
\newcommand{\B}{\mathrm{B}}
\newcommand{\vN}{\mathrm{vN}}
\newcommand{\qB}{\mathrm{qB}}
\newcommand{\X}{\mathscr{X}}
\newcommand{\Y}{\mathscr{Y}}
\theoremstyle{plain}
\newtheorem{thm}{Theorem}
\newtheorem{prop}{Proposition}
\newtheorem{conj}{Conjecture}
\theoremstyle{definition}
\newcommand{\be}{\begin{equation}}
\newcommand{\ee}{\end{equation}}
\begin{document}
\maketitle
\begin{abstract}
The Gibbs entropy of a macroscopic classical system is a function of a probability distribution over phase space, i.e., of an ensemble. In contrast, the Boltzmann entropy is a function on phase space, and is thus defined for an individual system. Our aim is to discuss and compare these two notions of entropy, along with the associated ensemblist and individualist views of thermal equilibrium. Using the Gibbsian ensembles for the computation of the Gibbs entropy, the two notions yield the same (leading order) values for the entropy of a macroscopic system in thermal equilibrium. The two approaches do not, however, necessarily agree for non-equilibrium systems. For those, we argue that the Boltzmann entropy is the one that corresponds to thermodynamic entropy, in particular in connection with the second law of thermodynamics. Moreover, we describe the quantum analog of the Boltzmann entropy, and we argue that the individualist (Boltzmannian) concept of equilibrium is supported by the recent works on thermalization of closed quantum systems.

\bigskip

  \noindent 
  Key words: statistical mechanics; second law of thermodynamics; thermal equilibrium.
\end{abstract}

\newpage

\tableofcontents

\section{Introduction}

Disagreement among scientists is often downplayed, and science often presented as an accumulation of discoveries, of universally accepted contributions to our common body of knowledge. But in fact, there is substantial disagreement among physicists, not only concerning questions that we have too little information about to settle them, such as the nature of dark matter, but also concerning conceptual questions about which all facts have long been in the literature, such as the interpretation of quantum mechanics. Another question of the latter type concerns the definition of entropy and some related concepts. In particular, two different formulations are often given in the literature for how to define the thermodynamic entropy (in equilibrium and non-equilibrium states) of a macroscopic physical system in terms of a microscopic, mechanical description (classical or quantum).

\subsection{Two Definitions of Entropy in Classical Statistical Mechanics}
\label{sec:twodef}

In classical mechanics, the \emph{Gibbs entropy} of a physical system with phase space $\X$, for example $\X = \RRR^{6N}=\{(\vq_1,\vv_1,\ldots,\vq_N,\vv_N)\}$ for $N$ point particles in $\RRR^3$ with positions $\vq_j$ and velocities $\vv_j$, is defined as
\be\label{SGdef}
S_\G(\rho) = - \kk \int_{\X} \!\! dx \, \rho(x) \, \log \rho(x)\,,
\ee
where $\kk$ is the Boltzmann constant, $dx=N!^{-1}d^3\vq_1 \, d^3\vv_1\cdots d^3\vq_N\, d^3\vv_N$ the (symmetrized) phase space volume measure, $\log$ the natural logarithm,\footnote{One actually takes the expression $u \log u$ to mean the continuous extension of the function $u \mapsto u\log u$ from $(0,\infty)$ to the domain $[0,\infty)$; put differently, we follow the convention to set $0\log 0 =0$.} and $\rho$ a probability density on $\X$.\footnote{Changing the unit of phase space volume will change $\rho(x)$ by a constant factor and thus $S_\G(\rho)$ by addition of a constant, an issue that does not matter for the applications and disappears anyway in the quantum case.}

The \emph{Boltzmann entropy} of a macroscopic system is defined as
\be\label{SBdef}
S_\B(X) = \kk \,\log \vol \Gamma(X)\,,
\ee
where $X\in\X$ is the actual phase point of the system, $\vol$ means the volume in $\X$, and $\Gamma(X)$ is the set of all phase points that ``look macroscopically the same'' as $X$. Obviously, there is no unique precise definition for ``looking macroscopically the same,'' so we have a certain freedom to make a reasonable choice. It can be argued that for large numbers $N$ of particles (as appropriate for macroscopic physical systems), the arbitrariness in the choice of $\Gamma(X)$ shrinks and becomes less relevant.\footnote{For example, for a dilute gas and large $N$, $-S_\B(X)/\kk +N$ equals approximately the $H$ functional, i.e., the integral in \eqref{-Hdef} below, which does not refer any more to a specific choice of boundaries of $\Gamma(X)$.} A convenient procedure is to partition the phase space into regions $\Gamma_\nu$ we call macro sets (see Figure~\ref{fig:phasespace}), 
\be\label{partition}
\X = \bigcup_\nu \Gamma_\nu\,,
\ee
and to take as $\Gamma(X)$ the $\Gamma_\nu$ containing $X$. That is,
\be
S_\B(X) = S_\B(\nu) := \kk \log \vol \Gamma_\nu\,.
\ee
We will give more detail in Section~\ref{sec:vision}. Boltzmann's definition \eqref{SBdef} is often abbreviated as ``$S_\B=\kk \log W$'' with $W=\vol \Gamma(X)$. In every energy shell there is usually one macro set $\Gamma_\nu= \Gamma_\eq$ that corresponds to thermal equilibrium and takes up by far most (say, more than $99.99\%$) of the volume (see Section~\ref{sec:macrostates}).

\begin{figure}[h]
\begin{center}
\includegraphics[width=7cm]{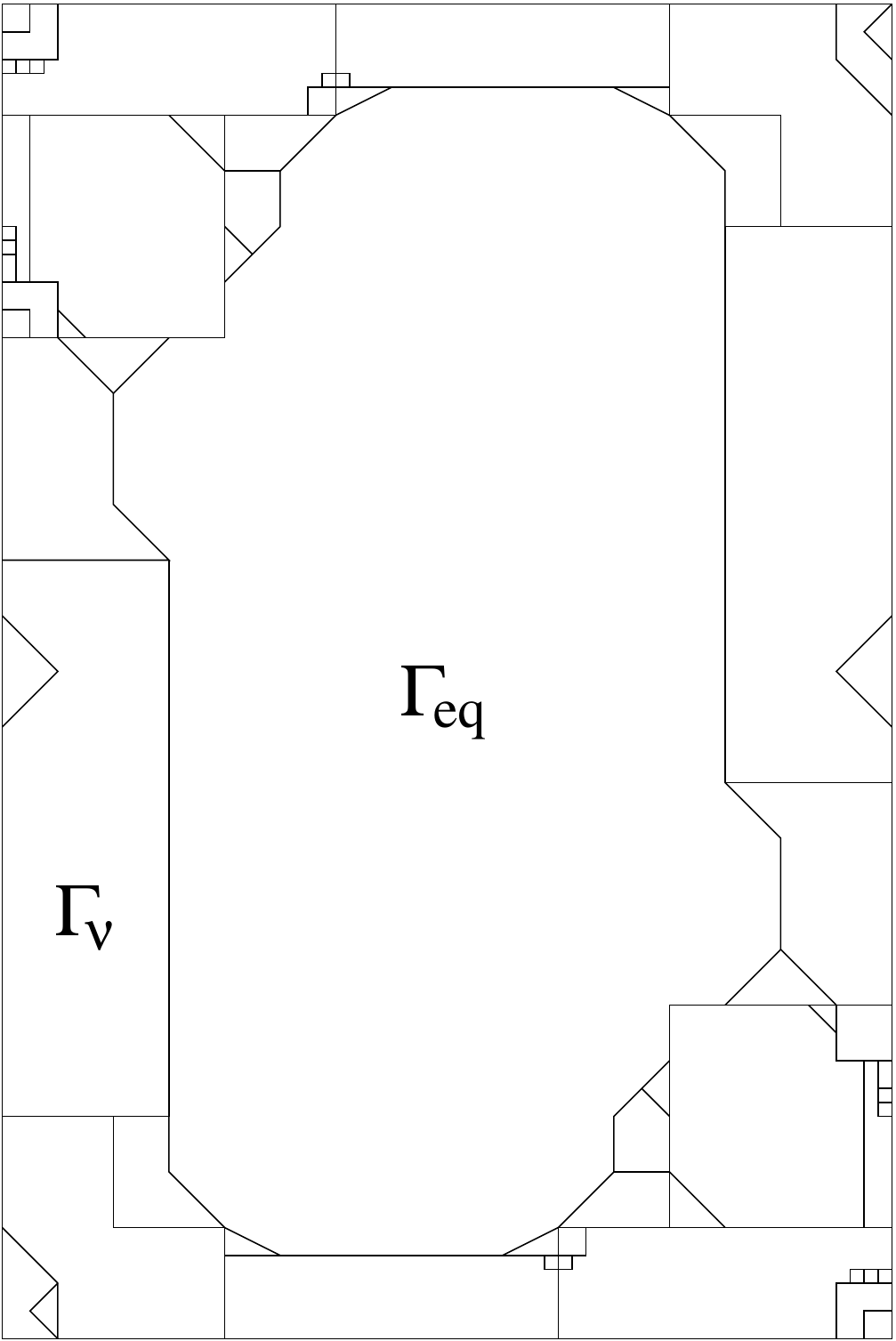}
\end{center}
\caption{Partition of phase space, or rather an energy shell therein, into macro sets $\Gamma_\nu$, with the thermal equilibrium set taking up most of the volume (not drawn to scale). Reprinted from \citep{GHLT17}.}
\label{fig:phasespace}
\end{figure}

\subsection{$X$ vs.\ $\rho$}
\label{sec:xrho}

An immediate problem with the Gibbs entropy is that while every classical system has a definite phase point $X$ (even if we observers do not know it), a system does not ``have a $\rho$''; that is, it is not clear which distribution $\rho$ to use. For a system in thermal equilibrium, $\rho$ presumably means a Gibbsian equilibrium ensemble (micro-canonical, canonical, or grand-canonical). It follows that, for thermal equilibrium states, $S_\B$ and $S_\G$ agree to leading order, see \eqref{SBeqSGmc} below. In general, several possibilities for $\rho$ come to mind:
\begin{itemize}
\item[(a)] ignorance: $\rho(x)$ expresses the strength of an observer's belief that $X=x$.
\item[(b)] preparation procedure: A given procedure does not always reproduce the same phase point, but produces a random phase point with distribution $\rho$.
\item[(c)] coarse graining: Associate with every $X\in\X$ a distribution $\rho_X(x)$ on $\X$ that captures how macro-similar $X$ and $x$ are (or perhaps, how strongly an ideal observer seeing a system with phase point $X$ would believe that $X=x$).
\end{itemize}
Correspondingly, there are several different notions of Gibbs entropy, which we will discuss in Sections~\ref{sec:subjective} and \ref{sec:agree}. Here, maybe (c) could be regarded as a special case of (a), and thermal equilibrium ensembles as a special case of (b). In fact, it seems that Gibbs himself had in mind that any system in thermal equilibrium has a random phase point whose distribution $\rho$ should be used, which is consistent with option (b); in his words \citep[p.~152]{Gib02}: 
\begin{quotation}
[\ldots] we shall find that [the] distinction [between interaction of a system $S_1$ with a system $S_2$ with determined phase point $X_2$ and one with distribution $\rho_2$] corresponds to the distinction in thermodynamics between mechanical and thermal action.
\end{quotation}
In our discussion we will also address the status of the Gibbsian ensembles \citep[see also][]{Gol19}. We will argue that $S_\B$ qualifies as a definition of thermodynamic entropy whereas version (a) of $S_\G$ does not; (b) is not correct in general; and (c) is acceptable to the extent that it is not regarded as a special case of (a). 

Different views about the meaning of entropy and the second law have consequences about the explanatory and predictive power of the second law that we will consider in Section~\ref{sec:subjective}. They also have practical consequences in the formulation of hydrodynamic equations, e.g., Navier-Stokes equations, for macroscopic variables \citep{GL04}: such macroscopic equations can be compatible with a microscopic Hamiltonian evolution only if they make sure that the Boltzmann entropy increases.

\bigskip

The remainder of this article is organized as follows. In Section~\ref{sec:status2nd}, we raise the question of the status of the second law for $S_\G$ and $S_\B$. In Section~\ref{sec:quantumcase}, we consider the analogs of Gibbs and Boltzmann entropy in quantum mechanics. The following Sections \ref{sec:subjective}--\ref{sec:indep} focus again on the classical case for simplicity. In Section~\ref{sec:subjective}, we discuss and criticize option (a), the idea that entropy is about subjective knowledge. In Section~\ref{sec:vision}, we explain why Boltzmann entropy indeed tends to increase with time and discuss doubts and objections to this statement. In Section~\ref{sec:agree}, we discuss an individualist understanding of Gibbs entropy as a generalization of Boltzmann entropy. In Section~\ref{sec:status}, we discuss the status of Gibbs's ensembles. In Section~\ref{sec:indep}, we comment on a few proposals for how entropy increase should work for Gibbs entropy. In Section~\ref{sec:qm}, we add some deeper considerations about entropy in quantum mechanics. In Section~\ref{sec:conclusions}, we conclude.

\section{Status of the Second Law}
\label{sec:status2nd}

\subsection{Gibbs Entropy}

Another immediate problem with the Gibbs entropy is that it does not change with time, 
\be\label{SGt}
\frac{dS_\G(\rho_t)}{dt}=0\,,
\ee
if $\rho$ is taken to evolve in accord with the microscopic Hamiltonian dynamics, that is, according to the Liouville equation
\be\label{Liouville}
\frac{\partial \rho}{\partial t}= - \sum_{i=1}^{6N} \frac{\partial}{\partial x_i} \bigl( \rho(x)\, v(x) \bigr)\,,
\ee
where $v(x)$ is the vector field on $\X$ that appears in the equation of motion
\be
\frac{dX_i}{dt} = v_i(X(t))
\ee
for the phase point $X(t)\in\X$, such as $v=\omega \nabla H$ with $6N\times 6N$ matrix $\omega= \bigl(\begin{smallmatrix} 0&I\\-I&0\end{smallmatrix} \bigr)$ in position and momentum coordinates and $H$ the Hamiltonian function. Generally, the Gibbs entropy does not change when $\rho$ gets transported by any volume-preserving bijection $\Phi:\X \to \X$, 
\be\label{SGPhi}
S_\G(\rho \circ \Phi^{-1}) = S_\G(\rho)\,.
\ee
In particular, by Liouville's theorem of the conservation of phase space volume, this applies when $\Phi=\Phi_t$ is the Hamiltonian time evolution, with $X(t) = \Phi_t(X(0))$. The time independence of $S_\G(\rho)$ conflicts with the formulation of the second law given by Clausius, the person who coined the ``laws of thermodynamics'' as follows \citep[p.~365]{Cla65}:
\begin{quotation}
    1. The energy of the universe is constant.

    2. The entropy of the universe tends to a maximum.
\end{quotation}
Among the authors who took entropy to be the Gibbs entropy, some \cite[e.g.,][]{Khi} have argued that Clausius's wording of the second law is inappropriate or exaggerated, others \citep[e.g.,][]{Mac89} that the Liouville evolution \eqref{Liouville} is not the relevant evolution here. We will come back to this point in Section~\ref{sec:indep}. As we will explain in Section~\ref{sec:vision}, Clausius's statement is actually correct for the Boltzmann entropy.

\subsection{Boltzmann's $H$ and $\kk \log W$}

We should address right away a certain confusion about Gibbs and Boltzmann entropy that has come from the fact that Boltzmann used, in connection with the Boltzmann equation, the definition 
\be\label{-Hdef}
S=-\kk  \int_{\X_1} \!\! d^3\vq \, d^3\vv \, \tilde{f}(\vq,\vv) \log \tilde{f}(\vq,\vv)
\ee
for entropy. Boltzmann used the notation $H$ (not to be confused with the Hamiltonian function) for the integral in \eqref{-Hdef}; this functional gave name to the $H$-theorem, which asserts that $H$ always decreases---except in thermal equilibrium, when $H$ is constant in time.

Here, $\X_1$ is the 1-particle phase space (for definiteness, $\RRR^6$), $\tilde{f}=Nf$, and $f$ is a normalized distribution density in $\X_1$. The formula \eqref{-Hdef} obviously looks much like the Gibbs entropy (and has presumably inspired Gibbs's definition, which was published after \eqref{-Hdef}). In fact, \eqref{-Hdef} is also the Gibbs entropy of $\rho(x_1,\ldots,x_N) := f(x_1) \cdots f(x_N)$ (up to addition of the constant $\kk N\log N$). But here it is relevant that $f$ means the empirical distribution of $N$ points in $\X_1$ (after smoothing, or in the limit $N\to\infty$), and is thus computed from $X\in \X$ (see Section~\ref{sec:BoltzmannEq} below for more detail). So, the $H$ functional or the $S$ in \eqref{-Hdef} is indeed a function on $\X$, and in fact a special case of \eqref{SBdef} (up to addition of the constant $\kk N$) corresponding to a particular choice of dividing $\X$ into macro sets $\Gamma_\nu$ (see Section~\ref{sec:macrostates}).

\section{The Quantum Case}
\label{sec:quantumcase}

Consider a macroscopic quantum system (e.g., a gas of $N>10^{23}$ atoms in a box) with Hilbert space $\Hilbert$. We write $\SSS(\Hilbert)$ for the unit sphere in the Hilbert space $\Hilbert$. 

\subsection{Entropy in Quantum Mechanics}
\label{sec:SQM}

The natural analog of the Gibbs entropy is the \emph{quantum Gibbs entropy} or \emph{von Neumann entropy} of a given density matrix $\hat\rho$ on $\Hilbert$ \citep{vN27c},
\be\label{SvN}
S_{\vN}(\hat\rho) = -\kk \tr (\hat\rho \log \hat\rho)\,.
\ee
Von Neumann \citeyearpar{vN29} himself thought that this formula was not the most fundamental one but only applicable in certain circumstances; we will discuss his proposal for the fundamental definition in Section~\ref{sec:qm} below. 
In a sense, the density matrix $\hat\rho$ plays the role analogous to the classical distribution density $\rho$, and again, the question arises as to what exactly $\hat\rho$ refers to: an observer's ignorance or what? Our discussion of options (a)--(c) above for the Gibbs entropy will apply equally to the von Neumann entropy. In addition, there is also a further possibility:
\begin{itemize}
\item[(d)] reduced density matrix: The system $\sys_1$ is entangled with another system $\sys_2$, and
$\hat\rho$ is obtained from the (perhaps even pure) state $\Psi\in\SSS(\Hilbert_1\otimes \Hilbert_2)$ of $\sys_1\cup \sys_2$ through a partial trace, $\hat\rho= \tr_2 \pr{\Psi}$. 
\end{itemize}
As we will discuss in Section~\ref{sec:entangle}, $S_\vN$ with option (d) does not yield a good notion of thermodynamic entropy, either.

In practice, systems are never isolated. But \emph{even if} a macroscopic system were perfectly isolated, heat would flow in it from the hotter to the cooler parts, and, as we will explain in this section, there is a natural sense in which entropy can be defined and increases. The idealization of an isolated system helps us focus on this sense. In Section~\ref{sec:open}, we will point out why the results also cover non-isolated systems.

\bigskip

The closest quantum analog of the Boltzmann entropy is the following. A macro state $\nu$ should correspond to, instead of a subset $\Gamma_\nu$ of phase space, a subspace $\Hilbert_\nu$ of Hilbert space $\Hilbert$, called a \emph{macro space} in the following; for different macro states $\nu'\neq \nu$, the macro spaces should be mutually orthogonal, thus yielding a decomposition of Hilbert space into an orthogonal sum \citep{vN29,GLMTZ10,GLTZ10},
\be\label{decomp}
\Hilbert = \bigoplus_\nu \Hilbert_\nu\,,
\ee
instead of the partition \eqref{partition}. Now the dimension of a subspace of $\Hilbert$ plays a role analogous to the volume of a subset of $\X$, and correspondingly we define the \emph{quantum Boltzmann entropy} of a macro state $\nu$ by \citep{Gri,L07,GLTZ10}
\be\label{SqBdef}
S_\qB(\nu) = \kk \log \dim \Hilbert_\nu\,.
\ee
In fact, already \citet[Eq.~(4a)]{Ein14} argued that the entropy of a macro state should be proportional to the log of the ``number of elementary quantum states'' compatible with that macro state; it seems that \eqref{SqBdef} fits very well with this description. To a system with wave function $\psi \in \Hilbert_\nu$ (or, for that matter, a density matrix concentrated in $\Hilbert_\nu$) we attribute the entropy value $S_\qB(\psi) := S_\qB(\nu)$.

It seems convincing that $S_\qB(\nu)$ yields the correct value of thermodynamic entropy. For one thing, it is an extensive, or additive quantity: If we consider two systems $\sys_1,\sys_2$ with negligible interaction, then the Hilbert space of both together is the tensor product, $\Hilbert=\Hilbert_{\sys_1\cup \sys_2}= \Hilbert_{\sys_1} \otimes \Hilbert_{\sys_2}$, and it seems plausible that the macro states of $\sys_1\cup \sys_2$ correspond to specifying the macro state of $\sys_1$ and $\sys_2$, resp., i.e., $\nu=(\nu_1,\nu_2)$ with $\Hilbert_\nu = \Hilbert_{\nu_1} \otimes \Hilbert_{\nu_2}$. As a consequence, the dimensions multiply, so
\be
S_\qB(\nu) = S_\qB(\nu_1) + S_\qB(\nu_2)\,.
\ee
For another thing, it is plausible that, analogously to the classical case, in every energy shell (i.e., the subspace $\Hilbert_{(E-\Delta E,E]}$ corresponding to an energy interval $(E-\Delta E,E]$ with $\Delta E$ the resolution of macroscopic energy measurements) there is a dominant macro space $\Hilbert_{\tilde\nu}$ whose dimension is near (say, greater than $99.99\%$ of) the dimension of the energy shell, and this macro state $\tilde\nu$ corresponds to thermal equilibrium \citep[e.g.,][]{Tas15b}, $\Hilbert_{\tilde \nu}=\Hilbert_\eq$. As a consequence,
\be
S_\qB(\eq) \approx \kk \log \dim \Hilbert_{(E-\Delta E,E]}\,,
\ee
and the right-hand side is well known to yield appropriate values of thermodynamic entropy in thermal equilibrium. In fact, the right-hand side agrees with $S_\vN(\hat\rho_\mc)$, the von Neumann entropy associated with the micro-canonical density matrix $\hat\rho_\mc$ (i.e., the normalized projection to $\Hilbert_{(E-\Delta E,E]}$) and with thermal equilibrium at energy $E$.

\bigskip

Of course, a general pure quantum state $\psi \in \SSS(\Hilbert)$ will be a non-trivial superposition of contributions from different $\Hilbert_\nu$'s, $\psi = \sum_\nu \hat P_\nu \psi$, where $\hat P_\nu$ is the projection to $\Hilbert_\nu$. One can say that $\psi$ is a superposition of different entropy values, and in analogy to other observables one can define the self-adjoint operator
\be\label{hatS}
\hat S = \sum_\nu S_\qB(\nu) \, \hat P_\nu \,,
\ee
whose eigenvalues are the $S_\qB(\nu)$ and eigenspaces the $\Hilbert_\nu$. 

Here, the question arises as to which entropy value we should attribute to a system in state $\psi$. At this point, the foundations of statistical mechanics touch the foundations of quantum mechanics, as the problem of Schr\"odinger's cat (or the measurement problem of quantum mechanics) concerns precisely the status of wave functions $\psi$ that are superpositions of macroscopically different contributions, given that our intuition leads us to expect a definite macroscopic situation. The standard ``Copenhagen'' formulation of quantum mechanics does not have much useful to say here, but several proposed ``quantum theories without observers'' have long solved the issue in clean and clear ways \citep{Gol98}: 
\begin{itemize}
\item Bohmian mechanics (in our view the most convincing proposal) admits further variables besides $\psi$ by assuming that quantum particles have definite positions, too. Since Schr\"odinger's cat then has an actual configuration, there is a fact about whether it is dead or alive, even though the wave function is still a superposition. In the same way, the configuration usually selects one of the macroscopically different contributions $\hat P_\nu \psi$; in fact, this happens when
\be\label{dontoverlap}
\text{the significantly nonzero $\hat P_\nu \psi$ do not overlap much in configuration space.}
\ee
Artificial examples can be constructed for which this condition is violated, but it seems that \eqref{dontoverlap} is fulfilled in practice. In that case, it seems natural to regard the entropy of that one contribution selected by the Bohmian configuration as the actual entropy value.
\item Collapse theories modify the Schr\"odinger equation in a non-linear and stochastic way so that, for macroscopic systems, the evolution of $\psi$ avoids macroscopic superpositions and drives $\psi$ towards (a random) one of the $\Hilbert_\nu$. Then the question about the entropy of a macroscopic superposition does not arise.
\item Maybe a many-worlds view \citep{Eve57,Wal12}, in which each of the contributions corresponds to a part of reality, can be viable (see \citet{AGTZ11} for critical discussion). Then the system's entropy has different values in different worlds.
\end{itemize}

We have thus defined the quantum Boltzmann entropy of a macroscopic system,
\be\label{SqBsys}
S_\qB(\sys)\,,
\ee
in each of these theories.

Some authors (e.g., \citet{vN29,SDA17,SDA18}) have proposed averaging $S_\qB(\nu)$ with weights $\|\hat P_\nu \psi\|^2$, but that would yield an \emph{average} entropy, not \emph{the} entropy (see also Section~\ref{sec:wrongvalues} and our discussion of \eqref{SvN2} in Section~\ref{sec:historical}).

\bigskip

\label{sec:2ndquantum}

Let us turn to the question of what the second law asserts in quantum mechanics. Since $S_\vN$ remains constant under the unitary time evolution in $\Hilbert$, the discussion given in Section~\ref{sec:indep} for $S_\G$ applies equally to $S_\vN$. About $S_\qB$, we expect the following.

\begin{conj}\label{conj:qB2ndlaw}
In Bohmian mechanics and collapse theories, for macroscopic systems $\sys$ with reasonable Hamiltonian $\hat H$ and decomposition $\Hilbert= \oplus_\nu\Hilbert_\nu$, every $\nu^*$ and most $\psi\in\SSS(\Hilbert_{\nu^*})$ are such that with probability close to 1, $S_\qB(\sys)$ is non-decreasing with time, except for infrequent, shallow, and short-lived valleys, until it reaches the thermal equilibrium value. Moreover, after reaching that value, $S_\qB(\sys)$ stays at that value for a very long time.
\end{conj}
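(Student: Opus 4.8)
The plan is to reduce the statement to one about the unitary evolution $\psi_t=e^{-i\hat H t/\hbar}\psi$, and then to assemble the result from three ingredients: (i) a typicality/concentration estimate over $\psi\in\SSS(\Hilbert_{\nu^*})$, (ii) von Neumann's quantum ergodic theorem in the form of ``normal typicality'' for the equilibrium stage, and (iii) a structural mixing hypothesis on $\hat H$ relative to the decomposition $\Hilbert=\oplus_\nu\Hilbert_\nu$ for the approach stage. First I would make the reduction. In Bohmian mechanics, equivariance together with the quantum equilibrium hypothesis implies that the configuration is $|\psi_t|^2$-distributed, so that under the non-overlap condition \eqref{dontoverlap} the system occupies macro state $\nu$ at time $t$ with probability $\approx\|\hat P_\nu\psi_t\|^2$, and hence $S_\qB(\sys)$ equals $S_\qB(\nu)$ with that probability. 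In collapse theories, the collapse dynamics drives $\psi_t$ toward a single $\Hilbert_\nu$, and the Born rule for the collapse outcome again assigns weight $\approx\|\hat P_\nu\psi_t\|^2$ to macro state $\nu$. Thus in both frameworks the conjecture becomes a statement about the family of distributions $\mu_t(\nu):=\|\hat P_\nu\psi_t\|^2$: for most $\psi\in\SSS(\Hilbert_{\nu^*})$, $\mu_t$ should concentrate, up to small and rare exceptions, on macro states whose $S_\qB$ is non-decreasing in $t$ until it reaches $\eq$, after which $\mu_t(\eq)\approx1$ for a very long time.

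Next I would treat the equilibrium stage, which is the part closest to being known. Assuming $\hat H$ has nondegenerate eigenvalues and nondegenerate energy gaps (part of ``reasonable Hamiltonian''), the quantum ergodic theorem of von Neumann, sharpened as in \citet{GLMTZ10}, gives that for \emph{every} $\psi$ in the energy shell the time average of $\|\hat P_\nu\psi_t\|^2$ equals $\sum_\alpha|\scp{\phi_\alpha}{\psi}|^2\,\|\hat P_\nu\phi_\alpha\|^2$ with small temporal variance, provided $\dim\Hilbert_\nu$ is not too small; if in addition $\hat H$ is ``generic'' with respect to the decomposition, in the sense that $\|\hat P_\nu\phi_\alpha\|^2\approx\dim\Hilbert_\nu/\dim\Hilbert_{(E-\Delta E,E]}$ for most energy eigenstates $\phi_\alpha$, then for the overwhelming majority of times $t$ one obtains $\|\hat P_\eq\psi_t\|^2\approx\dim\Hilbert_\eq/\dim\Hilbert_{(E-\Delta E,E]}\approx1$. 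The residual set of times on which $\|\hat P_\eq\psi_t\|^2$ dips supplies exactly the ``infrequent, shallow, short-lived valleys,'' and a second-moment bound over $\psi\in\SSS(\Hilbert_{\nu^*})$ controls the exceptional initial states for which even the time average misbehaves.

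The hard part will be the monotone approach from the low-entropy macro state $\nu^*$ to equilibrium — this is the genuine arrow-of-time content, and I do not expect it to follow from unitarity and typicality alone. The obstruction is the familiar one: $\|\hat P_\eq\psi_t\|^2\to1$ fails for the time-reversed partners of equilibrium-leaving states, so the ``most $\psi$'' quantifier is essential, and making it bite requires a quantitative input on how $\hat H$ mixes the $\Hilbert_\nu$'s — effectively a quantum analog of the classical fact used in Section~\ref{sec:vision} that Liouville-most points of a small macro set flow into larger ones. My plan would be to introduce such an input as a hypothesis, for instance a bound showing that the coarse-grained evolution on the index set $\{\nu\}$ is well approximated by a Markov chain whose transition weights into $\nu'$ scale with $\dim\Hilbert_{\nu'}$, and to verify this hypothesis in tractable cases: a system weakly coupled to a large bath, where a Davies-type weak-coupling master equation can be derived, or a random-matrix/ETH model for $\hat H$ restricted to the energy shell.

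Granting the mixing input, the drift toward larger macro spaces, the monotonicity up to rare valleys, and the long residence at $\eq$ would then follow by combining the Markov-chain estimate along the typical trajectory with the concentration bounds above; absent it, one can at present only prove the conjecture in such models and argue the general case by analogy with the classical Boltzmannian analysis. I therefore expect that the theorem, as opposed to the conjecture, will require either restricting $\hat H$ to a class with provable mixing with respect to $\oplus_\nu\Hilbert_\nu$, or accepting a physically motivated chaoticity assumption as an additional premise — the quantum counterpart of the ``Boltzmann assumption'' that underlies the corresponding classical statement.
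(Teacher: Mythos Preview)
The statement you are addressing is explicitly a \emph{conjecture} in the paper, not a theorem, and the paper offers no proof of it. Immediately after stating it the authors write that ``careful studies of this conjecture would be of interest but are presently missing,'' and the later Section~\ref{sec:SqBincrease} supplies only fragmentary supporting evidence: the remark that a deterministic macro evolution forces $S_\qB$ to be non-decreasing, a citation to \citet{GLMTZ10} for eventual arrival near $\Hilbert_\eq$ and long residence there, and Proposition~\ref{prop:2times}, a two-time estimate showing that most $\psi_0\in\SSS(\Hilbert_\nu)$ satisfy $\|\hat P_{\nu'}\psi_t\|^2\ll 1$ whenever $\dim\Hilbert_{\nu'}\ll\dim\Hilbert_\nu$. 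There is no argument in the paper for the monotone approach to equilibrium or for the detailed structure of the entropy valleys.

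Your proposal is therefore not to be compared against a proof in the paper, since there is none; it is a more ambitious research outline than anything the paper attempts. Your reduction to the weights $\|\hat P_\nu\psi_t\|^2$ via equivariance and the non-overlap condition, and your treatment of the long-time equilibrium stage via the quantum ergodic theorem, are in the spirit of the paper's discussion and of \citet{GLMTZ10}. You also correctly isolate the monotone-approach stage as the genuine obstruction and correctly recognize that an additional dynamical input (a Markov approximation for the coarse-grained evolution, ETH/random-matrix assumptions, or a weak-coupling master equation) would be needed --- which is precisely why the authors leave the statement as a conjecture. In short, your diagnosis is sound and goes beyond the paper, but it is a programme rather than a proof, and you should present it as such; the paper does not claim more.
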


Careful studies of this conjecture would be of interest but are presently missing. We provide a bit more discussion in Section~\ref{sec:qm}.

\subsection{Open Systems}
\label{sec:open}

For open (non-isolated) systems, the quantum Boltzmann entropy can still be considered and should still tend to increase---not for the reasons that make the von Neumann entropy of the reduced density matrix increase (evolution from pure to mixed states), but for the reasons that make the quantum Boltzmann entropy of an isolated system in a pure state increase (evolution from small to large macro spaces). Let us elaborate. 

In fact, the first question about an open system $\sys_1$ (interacting with its environment $\sys_2$) would be how to define its state. A common answer is the \emph{reduced density matrix}
\be\label{reddef}
\hat \rho_1:= \tr_2 |\Psi\rangle \langle \Psi|
\ee
(if $\sys_1\cup\sys_2$ is in the pure state $\Psi$); another possible answer is the \emph{conditional density matrix} \citep{DGTZ05}
\be\label{conddef}
\hat \rho_\mathrm{cond}:= \tr_{s_2} \langle x_2|\Psi\rangle \langle \Psi|x_2\rangle\Big|_{x_2=X_2}\,,
\ee
where $\tr_{s_2}$ denotes the partial trace over the non-positional degrees of freedom of $\sys_2$ (such as spin), the scalar products are partial scalar products involving only the position degrees of freedom of $\sys_2$, and $X_2$ is the Bohmian configuration of $\sys_2$. (To illustrate the difference between $\hat \rho_1$ and $\hat\rho_\mathrm{cond}$, if $\sys_1$ is Schr\"odinger's cat and $\sys_2$ its environment, then $\hat\rho_1$ is a mixture of a live and a dead state, whereas $\hat\rho_\mathrm{cond}$ is either a live or a dead state, each with the appropriate probabilities.)

Second, it seems that the following analog of \eqref{dontoverlap} is usually fulfilled in practice for both the reduced density matrix $\hat\rho=\hat\rho_1$ and the conditional density matrix $\hat\rho=\hat\rho_\mathrm{cond}$:
\be\label{dontoverlaprho}
\mbox{\begin{minipage}{0.85\textwidth}
For those $\nu$ for which $\hat P_\nu \hat\rho \hat P_\nu$ is significantly nonzero,
the functions $g_\nu$ on configuration space given by $g_\nu(x):= \langle x|\hat P_\nu \hat\rho \hat P_\nu | x\rangle$ do not overlap much.
\end{minipage}}
\ee
As with \eqref{dontoverlap}, if \eqref{dontoverlaprho} holds, then the Bohmian configuration $X_1$ of $\sys_1$ selects the actual macro state $\nu(X_1)$ of $\sys_1$. Moreover, $\nu(X_1)$ should usually be the same for $\hat\rho=\hat\rho_1$ as for $\hat\rho=\hat\rho_\mathrm{cond}$. Then this macro state determines the entropy,
\be
S_\B(\sys_1) := S_\B(\nu(X_1))\,.
\ee
In short, the concept of quantum Boltzmann entropy carries over from isolated systems in pure states to open systems.

\subsection{Quantum Thermalization}
\label{sec:thermalization}

In the 21st century, there has been a wave of works on the thermalization of closed quantum systems, often connected with the key words ``eigenstate thermalization hypothesis'' (ETH) and ``canonical typicality''; see, e.g., \citep{GMM04,GLTZ06,PSW06,GLMTZ10,GE15,GHLT15,KTL16} and the references therein. The common theme of these works is that an individual, closed, macroscopic quantum system in a pure state $\psi_t$ that evolves unitarily will, under conditions usually satisfied and after a sufficient waiting time, behave very much as one would expect a system in thermal equilibrium to behave; more precisely, on such a system $\sys$ with $\psi_0$ in an energy shell, relevant observables yield their thermal equilibrium values up to small deviations with probabilities close to 1. For example, this happens simultaneously for all observables referring to a small subsystem $\sys_1$ of $\sys$ that interacts weakly with the remainder $\sys_2:= \sys \setminus \sys_1$, with the further consequence (``canonical typicality'') that the reduced density matrix of $\sys_1$,
\be
\hat\rho_1 := \tr_{2} \pr{\psi_t}\,,
\ee
is close to a canonical one \citep{GLTZ06},
\be
\hat\rho_1 \approx \frac{1}{Z} e^{-\beta \hat H_1}\,,
\ee
for suitable $\beta$ and normalizing constant $Z$; here, $\hat H_1$ is the Hamiltonian of $\sys_1$. For another example, every initial wave function of $\sys$ in an energy shell will, after a sufficient waiting time and for most of the time in the long run, be close to the thermal equilibrium macro space $\Hilbert_\eq$ \citep{GLMTZ10}, provided that the Hamiltonian $\hat H$ is non-degenerate and (ETH) all eigenvectors of $\hat H$ are very close to $\Hilbert_{\eq}$.

These works support the idea that the approach to thermal equilibrium need not have anything to do with an observer's ignorance. In fact, the system $\sys$ always remains in a pure state, and thus has von Neumann entropy $S_\vN=0$ at all times. This fact illustrates that the kind of thermalization relevant here involve neither an increase in von Neumann entropy nor a stationary density matrix of $\sys$. Rather, $\psi_t$ reaches, after a sufficient waiting time, the $\varepsilon$-neighborhood of the macro space $\Hilbert_{\tilde\nu}=\Hilbert_\eq$ corresponding to thermal equilibrium in the energy shell \citep{GLMTZ10}. That is the ``individualist'' or ``Boltzmannian'' version of approach to thermal equilibrium in quantum mechanics.

In fact, there are two individualist notions of thermal equilibrium in quantum mechanics, which have been called ``macroscopic'' and ``microscopic thermal equilibrium'' \citep{GHLT17}. Boltzmann's approach requires only that macro observables assume their equilibrium values \citep{GLMTZ10}, whereas a stronger statement is actually true after a long waiting time: that all micro observables assume their equilibrium values \citep{GHLT15}. This is true not only for macroscopic systems, but also for small systems \citep{GHLS17}, and has in fact been verified experimentally for a system with as few as 6 degrees of freedom \citep{KTL16}.

We have emphasized earlier in this subsection that thermalization does not require mixed quantum states. We should add that this does not mean that pure quantum states are fully analogous to phase points in classical mechanics. In Bohmian mechanics, for example, the analog of a phase point would be the pair $(\psi,Q)$ comprising the system's wave function $\psi$ and its configuration $Q$.

\section{Subjective Entropy Is Not Enough}
\label{sec:subjective}

By ``subjective entropy'' we mean Gibbs entropy under option (a): the view that the Gibbs entropy is the thermodynamic entropy, and that the distribution $\rho$ in the Gibbs entropy represents an observer's subjective perspective and limited knowledge \citep[e.g.,][]{Jay65,Kry,Mac89,Teg}. We would like to point to three major problems with this view.

\subsection{Cases of Wrong Values}
\label{sec:wrongvalues}

The first problem is that in some situations, the subjective entropy does not appropriately reproduce the thermodynamic entropy. For example, suppose an isolated room contains a battery-powered heater, and we do not know whether it is on or off. If it is on, then after ten minutes the air will be hot, the battery empty, and the entropy of the room has a high value $S_3$. Not so if the heater is off; then the entropy has the low initial value $S_1<S_3$. In view of our ignorance, we may attribute a subjective probability of 50 percent to each of ``on'' and ``off.'' After ten minutes, our subjective distribution $\rho$ over phase space will be spread over two regions with macroscopically different phase points, and its Gibbs entropy $S_\G(\rho)$ will have a value $S_2$ between $S_1$ and $S_3$ (in fact, slightly above the average of $S_1$ and $S_3$).\footnote{Generally, if several distribution functions $\rho_i$ have mutually disjoint supports, and we choose one of them randomly with probability $p_i$, then the resulting distribution $\rho=\sum_i p_i \rho_i$ has Gibbs entropy $S_\G(\sum_i p_i \rho_i)
= -\kk\int dx (\sum_i p_i\, \rho_i(x)) \log (\sum_i p_i\, \rho_i(x))
= -\kk\sum_i \int dx \, p_i \, \rho_i \log (p_i\, \rho_i(x)) = \sum_i p_i\, S_\G(\rho_i) -\kk \sum_i p_i \,\log p_i > \sum_i p_i S_\G(\rho_i)$. In our example, $S_2=(S_1+S_3)/2 + \kk\log 2$.} But the correct thermodynamic value is not $S_2$, it is either $S_1$ (if the heater was off) or $S_3$ (if the heater was on). So subjective entropy yields the wrong value.

The same problem arises with option (b), which concerns a system prepared by some procedure that, when repeated over and over, will lead to a distribution $\rho$ of the system's phase point $X$. Suppose that the isolated room also contains a mechanism that tosses a coin or generates a random bit $Y\in\{0,1\}$ in some other mechanical way; after that, the mechanism turns on the heater or does not, depending on $Y$. We would normally say that the entropy $S$ after ten minutes is random, that $S=S_1$ with probability 1/2 and $S=S_3$ with probability 1/2. But the distribution $\rho$ created by the procedure (and the canonical distribution for each given value of $Y$) is the same as in the previous paragraph, and has Gibbs entropy $S_\G(\rho)=S_2$, the wrong value.

\subsection{Explanatory and Predictive Power}

The second major problem with subjective entropy concerns the inadequacy of its explanatory and predictive power. Consider for example the phenomenon that by thermal contact, heat always flows from the hotter to the cooler body, not the other way around. The usual explanation of this phenomenon is that entropy decreases when heat flows to the hotter body, and the second law excludes that. Now that explanation would not get off the ground if entropy meant subjective entropy: In the absence of observers, does heat flow from the cooler to the hotter? In distant stars, does heat flow from the cooler to the hotter? In the days before humans existed, did heat flow from the cooler to the hotter? After the human race becomes extinct, will heat flow from the cooler to the hotter? If not, why would observers be relevant at all to the explanation of the phenomenon? 

And as with explanation, so with prediction: Can we predict that heat will flow from the hotter to the cooler also in the absence of observers? If so, why would observers be relevant to the prediction?

So, subjective entropy does not seem relevant to either explaining or predicting heat flow. That leaves us with the question, what is subjective entropy good for? The study of subjective entropy is a subfield of psychology, not of physics. It is all about beliefs.

Some ensemblists may be inclined to say that the explanation of heat flow is that it occurs the same way \emph{as if} an observer observed it. But the fact remains that observers actually have nothing to do with it.

Once the problem of explanatory power is appreciated, it seems obvious that subjective entropy is inappropriate: How could an objective physical phenomenon such as heat flow from the hotter to the cooler depend on subjective belief? In fact, since different observers may have different, incompatible subjective beliefs, how could coherent consequences such as physical laws be drawn from them? And what if the subjects made mistakes, what if they computed the time-evolved distribution $\rho\circ \Phi_t$ incorrectly, what if their beliefs were irrational---would that end the validity of subjective entropy? Somebody may be inclined to say that subjective entropy is valid only if it is rational \citep[e.g.,][]{Bri18}, but that means basically to back off from the thought that entropy is subjective. It means that it does not play much of a role whether anybody's actual beliefs follow that particular $\rho$, but rather that there is a correct $\rho$ that should be used; we will come back to this view at the end of the next subsection.

Another drawback of the subjective entropy, not unrelated to the problem of explanatory power, is that it draws the attention away from the fact that the universe must have very special initial conditions in order to yield a history with a thermodynamic arrow of time. While the Boltzmann entropy draws attention to the special properties of the initial state of the universe, the subjective entropy hides any such objective properties under talk about knowledge.

\subsection{Phase Points Play No Role}

The third problem with subjective entropy is that $S_\G(\rho)$ has nothing to do with the properties of the phase points $x$ at which $\rho$ is significantly non-zero. $S_\G(\rho)$ measures essentially the width of the distribution $\rho$, much like the standard deviation of a probability distribution, except that the standard deviation yields the radius of the set over which $\rho$ is effectively distributed, whereas the Gibbs entropy yields the log of its volume (see \eqref{SGA} below). The problem is reflected in the fact, mentioned around \eqref{SGPhi}, that any volume-preserving transformation $\Phi:\X\to\X$ will leave the Gibbs entropy unchanged. It does not matter to the Gibbs entropy how the $x$'s on which $\rho$ is concentrated behave physically, although this behavior is crucial to thermodynamic entropy and the second law.

Some ensemblists may be inclined to say that the kind of $\rho$ that occurs in practice is not any old density function, but is approximately concentrated on phase points that look macroscopically similar. This idea is essentially option (c) of Section~\ref{sec:xrho}, which was to take $\rho$ as a kind of coarse graining of the actual phase point $X$. Specifically, if $\Gamma(X)$ denotes again the set of phase points that look macroscopically similar to $X$, then we may want to take $\rho=\rho_X$ to be the flat distribution over $\Gamma(X)$,
\be
\rho_X(x) = \frac{1}{\vol\Gamma(X)} 1_{\Gamma(X)}(x) = \begin{cases} 
(\vol \Gamma(X))^{-1} & \text{if }x\in \Gamma(X)\\ 0 &\text{if }x\notin \Gamma(X).
\end{cases}
\ee
With this choice we obtain exact agreement between the Gibbs and Boltzmann entropies,
\be
S_\G(\rho_X) = S_\B(X)\,.
\ee
Indeed, whenever $\rho$ is the flat distribution over any subset $A$ of phase space $\X$, $\rho(x) = (\vol A)^{-1} \, 1_A(x)$, then
\be\label{SGA}
S_\G(\rho) = - \kk \! \int_A \!\! dx \, \frac{(-1)}{\vol A} \log \vol A = \kk \log \vol A\,.
\ee
(This fact also illustrates the mathematical meaning of the Gibbs entropy of any distribution $\rho$ as the log of the volume over which $\rho$ is effectively distributed.)

Of course, if we associate an entropy value $S_\G(\rho_X)$ with every $X\in\X$ in this way, then the use of Gibbs's definition \eqref{SGdef} seems like an unnecessary detour. In fact, we have associated with every $X\in\X$ an entropy value $S(X)$, and talk about the knowledge of observers is not crucial to the definition of the function $S$, as is obvious from the fact that the function $S$ is nothing but the Boltzmann entropy, which was introduced without mentioning observers. 

This brings us once more to the idea that the $\rho$ in $S_\G(\rho)$ is the subjective belief of a \emph{rational} observer as advocated by \citet{Bri18}. One could always use the Boltzmann entropy and add a narrative about observers and their beliefs, such as: Whenever $X\in \Gamma_\nu$, a rational observer should use the flat distribution over $\Gamma_\nu$, and the Gibbs entropy of that observer's belief is what entropy really means. One could say such words. But they are also irrelevant, as observers' knowledge is irrelevant to which way heat flows, and the resulting entropy value agrees with $S_\B(X)$ anyway.

\subsection{What is Attractive About Subjective Entropy}

Let us turn to factors that may seem attractive about the subjective entropy view: First, it seems like the obvious interpretation of the density $\rho$ that comes up in all ensembles. But the Boltzmannian individualist view offers an alternative interpretation, as we explain in Section~\ref{sec:vision}. 

Second, it is simple and elegant. That may be true but does not do away with the problems mentioned. 

Third, the subjective view mixes well with certain interpretations of quantum mechanics such as Copenhagen and quantum Bayesianism, which claim that quantum mechanics is all about information or that one should not talk about reality. These interpretations are problematical as well, mainly because all information must ultimately be information about things that exist, and it does not help to leave vague and unspecified which things actually exist \citep{Gol98}. 

Fourth, the subjective view may seem to mix well with the work of \citet{Sha}, as the Shannon entropy is a discrete version of Gibbs entropy and often regarded as quantifying the information content of a probability distribution. But actually, there is not a strong link, as Shannon regarded the probabilities in his study of optimal coding of data for transmission across a noisy channel as objective and did not claim any connection with thermodynamics. (By the way, it is dangerous to loosely speak of the ``amount of information'' in the same way as one speaks of, e.g., the amount of sand; after all, the sand grains are equal to each other, and one does not care about whether one gets this or that grain, whereas different pieces of information are not equivalent to each other.)

Fifth and finally, a strong pull towards subjective entropy comes from the belief that ``objective entropy'' either does not work or is ugly---a wrong belief, as we will explain in Section~\ref{sec:vision}.

\subsection{Remarks}

Further critiques of subjective entropy can be found in \citep{Cal99,LM03,GHLS17,Gol19}. 

We would like to comment on another quote. \citet{Jay65}, a defender of subjective entropy, reported a dictum of Wigner's:
\begin{quotation}
Entropy is an anthropomorphic concept.
\end{quotation}
Of course, this phrase can be interpreted in very different ways. Jaynes took it to express that entropy refers to the knowledge of human observers---the subjective view that we have criticized. But we do admit that there is a trait in entropy that depends partly on human nature, and that is linked to a certain (though limited and usually unproblematical) degree of arbitrariness in the definition of ``looking macroscopically the same.'' This point will come up again in the next section.

\section{Boltzmann's Vision Works}
\label{sec:vision}

Many authors expressed disbelief that Boltzmann's understanding of entropy and the second law could possibly work. Von Neumann \citeyearpar[Sec.~0.6]{vN29} wrote:
\begin{quotation}
As in classical mechanics, also here [in the quantum case] there is no way that entropy could always increase, or even have a predominantly positive sign of its [time] derivative (or difference quotient): the time reversal objection as well as the recurrence objection are valid in quantum mechanics as well as in classical mechanics.
\end{quotation}
\citet[\S 33, p.~139]{Khi}:
\begin{quotation}
[One often] states that because of thermal interaction of material bodies the entropy of the universe is constantly increasing. It is also stated that the entropy of a system ``which is left to itself'' must always increase; taking into account the probabilistic foundation of thermodynamics, one often ascribes to this statement a statistical rather than absolute character. This formulation is wrong if only becuase the entropy of an isolated system is a thermodynamic function---not a phase-function---which means that it cannot be considered as a random quantity; if $E$ and all [external parameters] $\lambda_s$ remain constant the entropy cannot change its value whereas by changing these parameteres in an appropriate way we can make the entropy increase or decrease at will. Some authors (footnote: Comp.\ Borel, M\'ecanique statistique classique, Paris 1925.)\ try to generalize the notion of entropy by considering it as being a phase function which, depending on the phase, can assume different values for the same set of thermodynamical parameters, and try to prove that entropy so defined must increase, with overwhelming probability. However, such a proof has not yet been given, and it is not at all clear how such an artificial generalization of the notion of entropy could be useful to the science of thermodynamics.
\end{quotation}
\citet{Jay65}:
\begin{quotation}
[T]he Boltzmann $H$ theorem does not constitute a demonstration of the second law for dilute gases[.]
\end{quotation}
Even Boltzmann himself was at times unassured.
In a letter to Felix Klein in 1889, he wrote:
\begin{quotation}
Just when I received your dear letter I had another neurasthenic attack, as I often do in Vienna, although I was spared them altogether in Munich. With it came the fear that the whole $H$-curve was nonsense.
\end{quotation}

But actually, the $H$-curve (i.e., the time evolution of entropy) makes complete sense, Boltzmann's vision does work, and von Neumann, Khinchin, and Jaynes were all mistaken, so it is perhaps worth elucidating this point. Many other, deeper discussions can be found in the literature, e.g., qualitative, popular accounts in \citep{Pen,LM03,Car}, overviews in \citep{Gol99,L07,Gol19}, more technical and detailed discussions in \citep{Bol1898,ehrenfest,Lan76,GGL04,FPPRV07,CCCV16,GHLS17,Laz}. So we now give a summary of Boltzmann's explanation of the second law.

\subsection{Macro States}
\label{sec:macrostates}

We start with a partition $\X= \cup_\nu \Gamma_\nu$ of phase space into macro sets as in Figure~\ref{fig:phasespace}. A natural way of obtaining such a partition would be to consider several functions $M_j:\X\to\RRR$ ($j=1,\ldots,K$) that we would regard as ``macro variables.'' Since macro measurements have limited resolution (say, $\Delta M_j>0$), we want to think of the $M_j$ as suitably coarse-grained with a discrete set of values, say, $\{n\Delta M_j: n\in \ZZZ\}$. Then two phase points $x_1,x_2\in\X$ will look macroscopically the same if and only if $M_j(x_1)=M_j(x_2)$ for all $j=1,\ldots, K$, corresponding to
\be\label{GammaMdef}
\Gamma_\nu = \Bigl\{x\in \X: M_j(x)=\nu_j \:\forall j\Bigr\}\,,
\ee
one for every macro state $\nu=(\nu_1,\ldots,\nu_K)$ described by the list of values of all $M_j$. We will discuss a concrete example due to Boltzmann in Section~\ref{sec:BoltzmannEq}. Since coarse-grained energy should be one of the macro variables, say
\be\label{M1}
M_1(x) = [H(x)/\Delta E] \Delta E
\ee
with $H$ the Hamiltonian function and $[x]$ the nearest integer to $x\in\RRR$, every $\Gamma_\nu$ is contained in one micro-canonical energy shell
\be\label{Xmc}
\X_\mc := \X_{(E-\Delta E,E]} := \Bigl\{ x\in\X: E-\Delta E < H(x) \leq E \Bigr\}\,.
\ee
Of course, this description still leaves quite some freedom of choice and thus arbitrariness in the partition, as different physicists may make different choices of macro variables and of the way and scale to coarse-grain them; this realization makes an ``anthropomorphic'' element in $S_\B$ explicit. \citet{Wal18} complained that this element makes the Boltzmann entropy ``subjective'' as well, but that complaint does not seem valid: rather, $S_\B$ and its increase provide an objective answer to a question that is of interest from the human perspective. Moreover, as mentioned already, this anthorpomorphic element becomes less relevant for larger $N$. It is usually not problematical and not subject to the same problems as the subjective entropy.

Usually in macroscopic systems, there is, for every energy shell $\X_\mc$ (or, if there are further macroscopic conserved quantities besides energy, in the set where their values have been fixed as well), one macro set $\Gamma_{\tilde\nu}$ that contains most (say, more than $99.99\%$) of the phase space volume of $\X_\mc$ \citep[see, e.g.,][]{Bol1898,Lan73,Laz};\footnote{\label{fn:ferro}There are exceptions, in which none of the macro sets dominates; for example, in the ferromagnetic Ising model with vanishing external magnetic field and not-too-high temperature, there are two macro states (the first having a majority of spins up, the second having a majority of spins down) that together dominate but have equal volume; see also \citep{Laz}. But that does not change much about the discussion.} in fact \citep[Eq.~(6)]{GHLT17},
\be\label{10cN}
\frac{\vol \Gamma_{\tilde\nu}}{\vol \X_\mc} \approx 1-  10^{-cN}
\ee
with positive constant $c$. The existence of this dominant macro state means that all macro observables are nearly-constant functions on $\X_{\mc}$, in the sense that the set where they deviate (beyond tolerances) from their dominant values has tiny volume. This macro state $\tilde\nu$ is the thermal equilibrium state, $\Gamma_{\tilde\nu}=\Gamma_{\eq}$, see Figure~\ref{fig:phasespace}, and the dominant values of the macro observables are their thermal equilibrium values. That fits with thermal equilibrium having maximal entropy, and it has the consequence that
\be\label{SBeqSGmc}
S_\B(\eq) \approx \kk \log \vol \X_{\mc} = S_\G(\rho_\mc)\,,
\ee
where $\rho_\mc(x)= (\vol \X_\mc)^{-1} \, 1_{\X_\mc}(x)$ is the micro-canonical distribution, and the (relative or absolute) error in the approximation tends to zero as $N\to\infty$.

Moreover, different macro sets $\Gamma_\nu$ have vastly different volumes. In fact, usually the small macro sets of an energy shell taken together are still much smaller than the next bigger one,
\be\label{<nu<<nu}
\vol(\Gamma_{<\nu}^{(E-\Delta E,E]})\ll \vol(\Gamma_{\nu})
\ee
with
\be
\Gamma_{<\nu}^{(E-\Delta E,E]} := \bigcup_{\nu': S_\B(\nu')<S_\B(\nu)} \Gamma_{\nu'}\cap \X_{(E-\Delta E,E]} \,.
\ee
(There are exceptions to this rule of thumb; in particular, symmetries sometimes imply that two or a few macro sets must have approximately the same volume.)

\subsection{Entropy Increase}

Now increase of Boltzmann entropy means that the phase point $X(t)$ moves to bigger and bigger macro sets $\Gamma_\nu$. In this framework, the second law can be stated as follows.
\be\label{2ndlaw}
\mbox{\begin{minipage}{0.85\textwidth}
{\bf Mathematical second law.} Given $\nu \neq \eq$, for most phase points $X(0)$ in $\Gamma_\nu$, $X(t)$ moves to bigger and bigger macro sets as $t$ increases until it reaches $\Gamma_\eq$, except possibly for entropy valleys that are infrequent, shallow, and short-lived; once $X(t)$ reaches $\Gamma_\eq$, it stays in there for an extraordinarily long time, except possibly for infrequent, shallow, and short-lived entropy valleys.
\end{minipage}}
\ee
The described behavior is depicted in Figure~\ref{fig:S(t)}. Entropy valleys (i.e., periods of entropy decrease and return to the previous level) are also called fluctuations. The \emph{physical second law} then asserts that the actual phase point of a real-world closed system behaves the way described in \eqref{2ndlaw} for most phase points.

\begin{figure}[h]
\begin{center}
\includegraphics[width=.9 \textwidth]{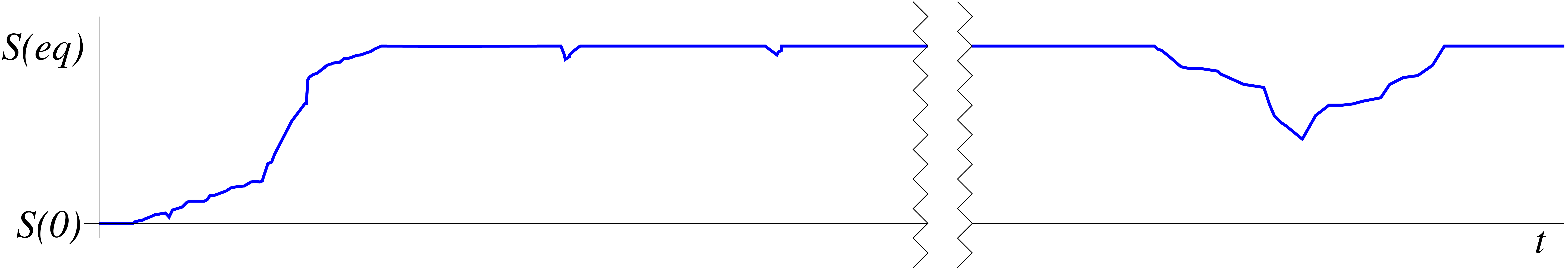}
\end{center}
\caption{A typical entropy curve $S(X(t))$ according to Boltzmann: It should go up except for infrequent, shallow, short-lived valleys; frequency, depth, and duration of the valleys are exaggerated for better visibility. After very long times, the entropy should go down considerably.}
\label{fig:S(t)}
\end{figure}

As an illustration of \eqref{2ndlaw} and as a step towards making it plausible, let us consider two times, 0 and $t>0$. Let $A_t := \Phi_t(\Gamma_\nu)$. By Liouville's theorem, $\vol(A_t)=\vol(\Gamma_\nu)$, and thus, by \eqref{<nu<<nu},
\be
\frac{\vol(A_t \cap \Gamma^{(E-\Delta E,E]}_{<\nu})}{\vol(A_t)}\ll 1\,.
\ee
That is, only a small minority of points in $A_t$ will have entropy smaller than $S_\B(\nu)$. That is, for most points $X(0)\in\Gamma_\nu$,
\be
S_\B(X(0)) \leq S_\B(X(t)) \,.
\ee

Another simple special case is the one in which the macro evolution is deterministic \citep{GGL04,GL04,DRMN06}. For the sake of concreteness, assume that in a time step of a certain size $\tau$, $\Gamma_{\nu_1}$ gets mapped into $\Gamma_{\nu_2}$, which in turn gets mapped into $\Gamma_{\nu_3}$, and so on up to $\nu_m$:
\be
\Phi_\tau(\Gamma_{\nu_i}) \subseteq \Gamma_{\nu_{i+1}}\,.
\ee
Then, by Liouville's theorem, $\vol \Gamma_{\nu_i}  \leq \vol \Gamma_{\nu_{i+1}}$, so
\be
S_\B(\nu_i) \leq S_\B(\nu_{i+1})
\ee
for all $i=1,\ldots, m-1$, so entropy does not decrease. Of course, in realistic cases, the macro evolution becomes deterministic only in the limit $N\to\infty$, and as long as $N$ is finite, there are a minority of points in $\Gamma_{\nu_i}$ that do not evolve to $\Gamma_{\nu_{i+1}}$.

Generally, if the Hamiltonian motion is not specially desgined for the given partition of $\X$, then it is quite intuitive that the motion of the phase point should tend to lead to larger macro sets, and not to smaller ones. Numerical simulations exhibiting this behavior are presented in \citep{FPPRV07}. It is also quite intuitive that the phase point would stay in $\Gamma_\eq$ for a very, very long time: If the non-equilibrium set $\Gamma_\mathrm{noneq}:=\Gamma^{(E-\Delta E,E]}_{<\eq}= \X_\mc \setminus \Gamma_\eq$ has only the fraction $10^{-cN}$ of the volume of the energy shell, cf.\ \eqref{10cN}, then only a tiny fraction of $\Gamma_\eq$ should be able to evolve into the non-equilibrium set $\Gamma_\mathrm{noneq}$ in a short time; and if most points in $\Gamma_\mathrm{noneq}$ spend a substantial amount of time there, then it will take very, very long until a substantial fraction of $\Gamma_\eq$ has visited $\Gamma_\mathrm{noneq}$. The statement that points in $\Gamma_\eq$ stay there for a long time fits well with the observed stationarity of thermal equilibrium---which is why it is called ``equilibrium.''

Let us briefly address two classic objections to the idea that entropy increases:
\begin{itemize}
\item Time reversal (Loschmidt's objection) shows that entropy increase cannot hold for \emph{all} phase points in $\Gamma_\nu$. Concretely, for relevant Hamiltonians the time reversal mapping $R:\X\to\X$, defined by
\be
R(\vq_1,\ldots,\vq_N,\vv_1,\ldots,\vv_N):= (\vq_1,\ldots,\vq_N, -\vv_1,\ldots,-\vv_N)
\ee
with $\vq_i$ the position and $\vv_i$ the velocity of particle $i$, has the property
\be
R\circ \Phi_t \circ R = \Phi_{-t}\,.
\ee
Usually, $R$ maps $\Gamma_\nu$ onto some $\Gamma_{\nu'}$ (where $\nu'$ may or may not equal $\nu$), so $S_\B(R(x))=S_\B(x)$. So if some $X(0)\in \Gamma_{\nu_1}$ evolves to $X(t) \in \Gamma_{\nu'}$ with $S_\B(\nu') > S_\B(\nu_1)$, then $R(X(t))\in \Gamma_\nu$ evolves to $R(X(0))$, and its entropy decreases.

\item Recurrence (Zermelo's objection) shows that $S_\B(X(t))$ cannot \emph{forever} be non-decreasing; thus, $X(t)$ cannot stay forever in $\Gamma_\eq$ once it reaches $\Gamma_\eq$. (The Poincar\'e recurrence theorem states that under conditions usually satisfied in $\X_\mc$, every trajectory $X(t)$, except for a set of measure zero of $X(0)$s, returns arbitrarily close to $X(0)$ at some arbitrarily late time.)
\end{itemize} 
Contrary to von Neumann's statement quoted in the beginning of Section~\ref{sec:vision}, the second law as formulated in \eqref{2ndlaw} is not refuted by either objection: after all, the second law applies to \emph{most}, not \emph{all}, phase points $X(0)$, and it does not claim that $X(t)$ will stay in thermal equilibrium \emph{forever}, but only for a very, very long time.

\subsection{Non-Equilibrium}

The term ``non-equilibrium'' is sometimes understood \citep{Gal} as referring to so called non-equilibrium steady states (NESS), which concerns, for example, a system $\sys$ coupled to two infinite reservoirs of different temperature; so $\sys$ is an open system heated on one side and cooled on another, and it will tend to assume a macroscopically stationary (``steady'') state with a temperature gradient, a nonzero heat current, and a positive rate of entropy production \citep{Ons31,BL55,Der07,GHLS17}. In contrast, in this Section~\ref{sec:vision} we are considering a closed system (i.e., not interacting with the outside), and ``non-equilibrium'' refers to any phase point in $\X_\mc \setminus \Gamma_\eq$. Examples of non-equilibrium macro states include, but are not limited to, states in local thermal equilibrium but not in (global) thermal equilibrium (such as systems hotter in one place than in another). Other examples arise from removing a constraint or wall; such macro states may have been in thermal equilibrium before the constraint was removed but are not longer so afterwards; for example, think of a macro state in which all particles are in the left half of a box; for another example, suppose we could turn on or off the interaction between two kinds of particles (say, ``red ones'' and ``blue ones''), and think of a macro state that is a thermal equilibrium state when the interaction is off (so that the red energy and the blue energy are separately conserved)  but not when it is on, such as when both gases are in the same volume but at different temperatures.

\subsection{Concrete Example: The Boltzmann Equation}
\label{sec:BoltzmannEq}

Here is a concrete example of a partition of phase space due to Boltzmann. Divide the 1-particle phase space $\X_1$ into cells $C_i$ (say, small cubes of equal volume) and count (with a given tolerance) the particles in each cell. The macro state is described by the list $\nu=(\nu_1,\ldots,\nu_L)$ of all these occupation numbers; for convenience, we will normalize them:
\be\label{occupation}
\nu_i(x) := f_i:=  \Biggl[\frac{\#\bigl\{j\in \{1,\ldots,N\}: (\vq_j,\vv_j)\in C_i \bigr\} }{N \, \Delta f \, \vol C_i} \Biggr]\Delta f 
\ee
with $N\Delta f$ the tolerance in counting and $[\cdot]$ again the nearest integer. This example of a partition $\Gamma_\nu=\{x\in\X: \nu(x)=\nu\}$ is good for dilute, weakly interacting gases but not in general \citep{GGL04,GL04} (see also Section~\ref{sec:objections}). 

Boltzmann considered $N$ billiard balls of radius $a$ in a container $\Lambda\subset\RRR^3$, so $\X_1=\Lambda \times \RRR^3$. In a suitable limit in which $N\to\infty$, $a\to 0$, and the cells $C_i$ become small, the normalized occupation numbers $f_i$ become a continuous density $f(\vq,\vv)$. He argued (convincingly) that for most $x\in \Gamma_\nu$ this density, essentially the empirical distribution of the $N$ particles in $\X_1$, will change in time according to the \emph{Boltzmann equation}, an integro-differential equation \citep{Bol1872,Bol1898,ehrenfest,Lan76}. It reads, in the version appropriate for the hard sphere gas without external forces,
\be
\label{BoltzmannEq}
\Bigl( \frac{\partial}{\partial t} + \vv \cdot \nabla_{\vq} \Bigr) f(\vq,\vv,t) 
= Q(\vq,\vv,t)
\ee
with the ``collision term''
\begin{multline}\label{collisionterm}
Q(\vq,\vv,t) =\lambda \int_{\RRR^3} d^3\vv_*\int_{\SSS^2} d^2\vomega\:1_{\vomega\cdot(\vv-\vv_*)>0} \: \vomega\cdot (\vv-\vv_*) \:\times\\
\Bigl[ f(\vq,\vv',t) \, f(\vq,\vv'_*,t) - f(\vq, \vv,t) \, f(\vq,\vv_*,t) \Bigr]\,,
\end{multline}
involving a constant $\lambda>0$ and the abbreviations
\begin{align}
\vv'&=\vv- [(\vv-\vv_*) \cdot \vomega]\vomega \label{v'1}\\
\vv'_* &= \vv_* + [(\vv-\vv_*)\cdot \vomega]\vomega \label{v'2}
\end{align}
for the outgoing velocities of a collision between two balls with incoming velocities $\vv$ and $\vv_*$ and direction $\vomega$ between the centers of the two balls.
The Boltzmann equation is considered for $\vv\in\RRR^3$ and $\vq\in\Lambda$ along with a boundary condition representing that balls hitting the boundary of $\Lambda$ get reflected there. A function $f(\vq,\vv)$ is a stationary solution if and only if it is independent of $\vq$ and a Maxwellian (i.e., Gaussian) in $\vv$---that is, if and only if it represents thermal equilibrium. Correspondingly, non-equilibrium macro states correspond to any density function $f$ that is not a global (i.e., $\vq$-independent) Maxwellian.

The entropy turns out to be (up to addition of the constant $\kk N$ and terms of lower order)
\be
S(x) = -\kk \sum_i \vol(C_i) \, Nf_i \log \bigl[Nf_i\bigr]\,.
\ee
In the limit of small $C_i$, this becomes \eqref{-Hdef}, i.e., $-\kk H$ in terms of the $H$ functional. Boltzmann further proved the $H$-theorem, which asserts that for any solution $f_t$ of the Boltzmann equation,
\be
\frac{dH}{dt} \leq 0\,,
\ee
with equality only if $f$ is a local Maxwellian. The $H$-theorem amounts to a derivation of the second law relative to the partition $\{\Gamma_\nu\}$ under consideration.

\subsection{Rigorous Result}

Some authors suspected that Boltzmann's vision, and the Boltzmann equation in particular, was not valid. For example, \citet[\S 33, p.~142]{Khi} complained about individualist accounts of entropy:
\begin{quotation}
All existing attempts to give a general proof of this postulate [i.e., $S=\kk\log W$,] must be considered as an aggregate of logical and mathematical errors superimposed on a general confusion in the definition of the basic quantities.
\end{quotation}
But actually, the Boltzmann equation (and with it the increase of entropy) is rigorously valid for most phase points in $\Gamma_\nu$, at least for a short time, as proved by Lanford \citeyearpar{Lan75,Lan76}. Here is a summary statement of Lanford's theorem, leaving out some technical details: 

\begin{thm}\label{thm:Lanford}
Let $\overline{t}>0$ and $\lambda>0$ (the constant in the Boltzmann equation) be constants.
For a very large number $N$ of billiard balls of (very small) radius $a$ with $4Na^2=\lambda$, for every $0\leq t < \tfrac{1}{5} \overline{t}$, for any nice density $f_0$ in $\X_1= \Lambda \times \RRR^3$ with mean free time $\geq \overline{t}$, and for a coarse graining of $\X_1$ into cells $C_i$ that are small but not too small, most phase points $X$ with empirical distribution $f_0$ (relative to $\{C_i\}$ and within small tolerances) evolve in such a way that the empirical distribution of $X(t)$ (relative to $\{C_i\}$) is close to $f_t$, where $t'\mapsto f_{t'}$ is the solution of the Boltzmann equation with initial datum $f_0$. 
\end{thm}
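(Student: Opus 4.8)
The plan is to follow Lanford's strategy of passing from the deterministic $N$-body hard-sphere dynamics to the Boltzmann equation through the BBGKY hierarchy, with everything controlled by a series expansion that converges only for short times. First I would reformulate the claim in terms of marginals rather than empirical distributions. Equip the $N$-particle phase space --- the configurations in $(\Lambda\times\RRR^3)^N$ with no two balls overlapping --- with the probability measure $\mu$ proportional to $\prod_{j=1}^N f_0(\vq_j,\vv_j)$ restricted to that set. By a law of large numbers the empirical distribution of a $\mu$-random $X$ lies within any fixed tolerance of $f_0$ with probability tending to $1$, and the desired conclusion --- that the empirical distribution of $X(t)$ is close to $f_t$ for most $X$ --- follows once one knows that, for each fixed $s$, the $s$-particle marginal $f_s^{(N)}(t)$ of the evolved measure converges weakly, with a quantitative rate, to the tensor power $f_t^{\otimes s}$, where $f_t$ solves the Boltzmann equation \eqref{BoltzmannEq}. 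This is the \emph{propagation of chaos} that must be established; its $s=1,2$ cases already control the mean and variance of the empirical distribution.

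Second, I would write down the BBGKY hierarchy: differentiating $f_s^{(N)}(t)$ along the hard-sphere flow yields a free-transport term plus a collision term $C_{s,s+1}f_{s+1}^{(N)}$ that integrates the $(s{+}1)$-particle marginal over the contact sphere of radius $a$ between the newly added particle and each of the $s$ present ones, with a gain/loss structure dictated by the elastic reflection law \eqref{v'1}--\eqref{v'2}. Writing this in Duhamel (mild) form and iterating produces a series whose $n$-th term is an $n$-fold time integral of free-flow propagators interleaved with collision operators, i.e.\ a sum over ``collision trees'' on $s+n$ particles. The formal Boltzmann--Grad limit $4Na^2=\lambda$ fixed, $a\to 0$ --- in which the two colliding particles are evaluated at the same point --- gives the Boltzmann hierarchy, which admits the analogous series and whose unique solution, for factorized initial data, is exactly $f_t^{\otimes s}$. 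The proof then reduces to (i) a uniform bound making both series absolutely convergent on a common time interval, and (ii) term-by-term convergence of the BBGKY series to the Boltzmann one.

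For (i) I would use the Gaussian decay of the ``nice'' initial density $f_0$, which is propagated with growing but explicitly controlled constants by the free flow, to estimate each collision integral; combined with the combinatorial count $\sim(s+n)!$ of collision trees and the bound on the contact-sphere measure, the $n$-th term is dominated by $(C\,t/\overline{t})^n$ for a universal constant $C$, so both series converge absolutely for $t$ below a fixed fraction of the mean free time; optimizing these crude bounds is what yields the interval $0\le t<\tfrac15\overline{t}$. For (ii) the crux is to show that the ``recollision'' contributions --- collision-tree histories in which a particle just produced by a collision immediately strikes a third particle, or in which the finite radius $a$ makes the pre-collisional configuration differ from a genuine dynamical preimage --- have asymptotically vanishing measure as $a\to 0$. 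This is a geometric estimate: the set of impact directions and velocities realizing such an overlap has measure $O(a)$ after the relevant integrations, so each fixed-order term converges, and the uniform bound from (i) upgrades this to convergence of the full sum.

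The main obstacle is precisely this recollision estimate: it is the rigorous incarnation of Boltzmann's Stosszahlansatz, the hypothesis that colliding particles are uncorrelated, and it is the one step that genuinely exploits the Boltzmann--Grad scaling $Na^2\sim\mathrm{const}$ rather than some other regime. A second, unavoidable limitation is that the iterated series converges only for a fraction of a mean free time, so the theorem --- like Lanford's original --- is intrinsically a short-time result; pushing the validity to all times, or even to a full mean free time with good constants, remains open in general, with genuinely global results known only in special settings such as a dilute cloud of gas expanding into vacuum.
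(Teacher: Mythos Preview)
Your outline is a faithful sketch of Lanford's original argument---BBGKY hierarchy, Duhamel expansion into collision trees, uniform Cauchy--Kowalevskaya-type bounds giving convergence on a fraction of the mean free time, and the geometric recollision estimate that vindicates the Stosszahlansatz in the Boltzmann--Grad limit. As a summary of how the theorem is actually proved, it is essentially correct.

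However, the paper itself does \emph{not} prove Theorem~\ref{thm:Lanford}. It is stated there as a known result, attributed to Lanford \citeyearpar{Lan75,Lan76}, with the surrounding text explicitly calling it ``a summary statement of Lanford's theorem, leaving out some technical details.'' The paper then moves on to discuss the theorem's implications (the time-reversal behavior, the past hypothesis, the role of molecular chaos) without supplying any argument for the theorem itself. So there is no proof in the paper to compare your proposal against: you have reconstructed the proof that the paper merely cites. If the assignment was to reproduce the paper's treatment, the expected answer is simply a reference to Lanford's work rather than a derivation.
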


It is believed but not proven that the Boltzmann equation is valid for a much longer duration, maybe of the order of recurrence times. The method of proof fails after $\tfrac{1}{5}\overline{t}$, but it does not give reason to think that the actual behavior changes at $\tfrac{1}{5}\overline{t}$.

Where is the famous \emph{Stosszahlansatz}, or hypothesis of molecular chaos, in this discussion? This hypothesis was stated by Boltzmann as specifying the approximate number of collisions with parameter $\vomega$ between particles from cells $C_i$ and $C_j$ within a small time interval. In our discussion it is hidden in the assumption that the initial phase point $X(0)$ be \emph{typical} in $\Gamma_\nu$: Both Theorem~\ref{thm:Lanford} and the wording \eqref{2ndlaw} of the second law talk merely about \emph{most} phase points in $\Gamma_\nu$, and for most phase points in $\Gamma_\nu$ ($\nu=f$) it is presumably true that the number of upcoming collisions is, within reasonable tolerances, given by the hypothesis of molecular chaos, not just at the initial time but at all 
relevant times. We will discuss molecular chaos further in Section~\ref{sec:PH}.

\subsection{Empirical vs.\ Marginal Distribution}

Many mathematicians (e.g., \citet{Kac,CIP94} but also \citet{Tol38}) considered the Boltzmann equation in a somewhat different context with a different meaning, viz., with $f$ not the \emph{empirical} distribution but the \emph{marginal} distribution. This means the following. 
\begin{itemize}
\item The empirical distribution, for a given phase point $X=(\vq_1,\vv_1,\ldots,\vq_N,\vv_N)\in\X$, is the distribution on $\X_1$ with density
\be\label{fempdef}
f_\mathrm{emp}^X(\vq,\vv) = \frac{1}{N}\sum_{j=1}^N \delta^3(\vq_j-\vq) \, \delta^3(\vv_j-\vv) \,.
\ee
As such, it is not a continuous distribution but becomes roughly continuous-looking only after coarse graining with cells $C_i$ in $\X_1$ that are not too small (so that the occupation numbers are large enough), and it becomes a really continous distribution only after taking a limit in which the cells shrink to size 0 while $N\to\infty$ fast enough for the occupation numbers to become very large.
\item The marginal distribution starts from a distribution $\rho$ on phase space $\X$ and is obtained by integrating over the positions and velocities of $N-1$ particles (and perhaps averaging over the number $i$ of the particle not integrated out, if $\rho$ was not permutation invariant to begin with). The marginal distribution can also be thought of as the average of the (exact) empirical distribution: the empirical distribution $f_\mathrm{emp}^X$ associated with $X\in\X$ becomes a continuous function when $X$ is averaged over using a continuous $\rho$.
\end{itemize}

For example, \citet[first page]{Kac} wrote: 
\begin{quotation}
$f(r,v)dr \, dv$ is the average number of molecules in $dr\, dv$,
\end{quotation}
whereas 
\citet[\S 3 p.~36]{Bol1898} wrote:
\begin{quotation}
let $f(\xi,\eta,\zeta,t) d\xi\, d\eta \, d\zeta$ [\ldots] be the number of $m$-molecules whose velocity components in the three coordinate directions lie between the limits $\xi$ and $\xi+d\xi$, $\eta$ and $\eta +d\eta$, $\zeta$ and $\zeta +d \zeta$[.]
\end{quotation}
Note that Kac wrote ``average number'' and Boltzmann wrote ``number'': For Kac, $f$ was the marginal and for Boltzmann the empirical distribution.\footnote{That is also why Boltzmann normalized $f$ so that $\int_{\X_1} f =N$, not $\int_{\X_1}f=1$: a marginal of a probability distribution would automatically be normalized to 1, not $N$, but if $f$ means the empirical density then it is natural to take it to mean the density of particles in $\X_1$, which is normalized to $N$, not 1.}

Of course, the (coarse-grained) empirical distribution is a function of the phase point $X$, and so is any functional of it, such as $H$; thus, the empirical distribution can serve the role of the macro state $\nu$, and $H$ that of Boltzmann entropy. This is not possible for the marginal distribution. 

So why would anybody want the marginal distribution? Kac aimed at a rigorous derivation of the Boltzmann equation in whatever context long before Lanford's theorem, and saw better chances for a rigorous proof if he assumed collisions to occur at random times at exactly the rate given by Boltzmann's hypothesis of molecular chaos. This setup replaces the time evolution in phase space (or rather, since Kac dropped the positions, in $3N$-dimensional velocity space) by a stochastic process, in fact Markov jump process. (By the way, as a consequence, any density $\rho$ on $3N$-space tends to get wider over time, and its Gibbs entropy increases, contrary to the Hamiltonian evolution.) So the mathematician's aim of finding statements that are easier to prove leads in a different direction than the aim of discussing the mechanism of entropy increase in nature.

Another thought that may lead authors to the marginal distribution is that $f(\vq,\vv)$ $d^3\vq$ $d^3\vv$ certainly cannot be an integer but must be an infinitesimal, so it cannot be the number of particles in $d^3\vq\, d^3\vv$ but must be the \emph{average} number of particles. Of course, this thought neglects the idea that as long as $N$ is finite, also the cells $C_i$ should be kept of finite size and not too small, and the correct statement is that $f_i \, \vol C_i$ is the number of particles in $C_i$ (or, depending on the normalization of $f$, $N^{-1}$ times the number of particles); when followers of Boltzmann express the volume of $C_i$ as $d^3\vq\, d^3\vv$, they merely express themselves loosely.

\subsection{The Past Hypothesis}
\label{sec:PH}

Lanford's theorem has implications also for negative $t$: For most phase points in $\Gamma_f$, the Boltzmann equation also applies in the other time direction, so that entropy increases in both time directions! (See Figure~\ref{fig:2directions}.) That is, before time 0 the Boltzmann equation applies with the sign of $t$ reversed.

\begin{figure}[h]
\begin{center}
\begin{tikzpicture}
  \draw [thick, domain=0:6, samples=50] plot (\x,{2*(1-exp(-\x/2))});
  \draw [thick, domain=-6:0, samples=50] plot (\x,{2*(1-exp(\x))});
  \draw (-6,-0.5) -- (6,-0.5);
  \node at (5.8,-0.8) {$t$};
  \draw (-5,-1) -- (-5,2.5);
  \node at (-5.3,2.3) {$S$};
  \draw (0,-0.5) -- (0,-0.7);
  \node at (0,-0.9) {$0$};
\end{tikzpicture}
\end{center}
\caption{For most phase points in $\Gamma_\nu$, $\nu\neq \eq$, entropy increases in both time directions, albeit not necessarily at the same rate.}
\label{fig:2directions}
\end{figure}
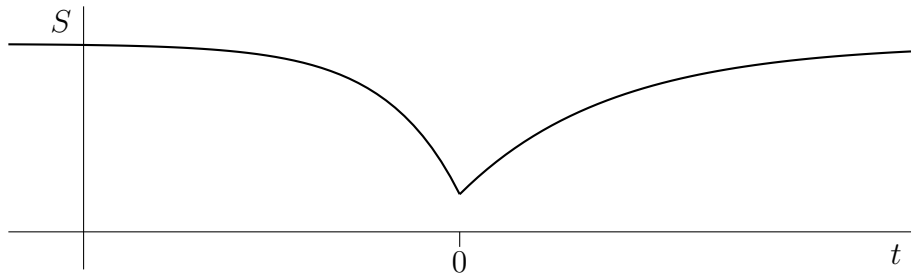

It is generally the case, not just for a dilute gas of hard spheres, that entropy increases in both time directions for most phase points in $\Gamma_\nu$. This fact leads to the following worry: If Lanford's theorem, or the statement \eqref{2ndlaw}, persuaded us to expect that entropy increases after the time we chose to call $t=0$, should it not persuade us just as much to expect that entropy decreases before $t=0$? But this decrease does not happen. Does that mean we were unjustified in expecting from Lanford's theorem or \eqref{2ndlaw} that entropy increases after $t=0$? 

Here is a variant of this worry in terms of explanation. Statement \eqref{2ndlaw} may suggest that the explanation for the increase of entropy after $t=0$ is that this happens for typical phase points $X(0)$. But if we know that entropy was lower before $t=0$ than at $t=0$, then $X(0)$ was not typical. This undercuts the explanation considered for the behavior after $t=0$.

Here is the resolution of this worry. The assumption really made by Boltzmann's followers is not that the phase point $X(0)$ at the beginning of an experiment is typical in $\Gamma_\nu$ but the following: 
\be\label{PH}
\mbox{\begin{minipage}{0.85\textwidth}
{\bf Past hypothesis.} The phase point of the universe at the initial time $T_0$ of the universe (presumably the big bang) is typical in its macro set $\Gamma_{\nu_0}$, where $\nu_0$ has very low entropy.
\end{minipage}}
\ee
Here, ``typical'' means that in relevant ways it behaves like most points in $\Gamma_{\nu_0}$. The ways relevant here are features of the macro history of the universe shared by most points in $\Gamma_{\nu_0}$.

Given that entropy keeps increasing, the initial macro state must be one of extremely low entropy; one estimate \citep{Pen} yields $10^{123}$ Joule/Kelvin less than the thermal equilibrium entropy at the same energy; thus, $\Gamma_{\nu_0}$ must have extremely small volume compared to the relevant energy shell. All the same, we do not know very clearly what $\nu_0$ actually is; one proposal is known as the Weyl curvature hypothesis \citep{Pen79}.

\bigskip

A related worry at this point may arise from the observation (for any macroscopic system or specifically a hard sphere gas as considered for the Boltzmann equation) that if the time evolution $\Phi_t$ of $\Gamma_\nu$ lies (say) in a macro set $\Gamma_{\nu'}$ of much greater volume, then phase points $X(t)$ coming from $X(0)\in\Gamma_\nu$ would be \emph{atypical} in $\Gamma_{\nu'}$. So if the prediction of entropy increase after $t$ was based on the assumption that $X(t)$ be \emph{typical} in $\Gamma_{\nu'}$, then it could not be applied to $X(0)\in \Gamma_\nu$. So why should entropy still increase at $t>0$? Because Lanford's theorem says so---at least until $\overline{t}/5$. But even after that time, it is plausible that the Boltzmann equation continues to be valid (and therefore entropy continues to increase) because it is plausible that the number of upcoming collisions of each type agrees with the value specified by the hypothesis of molecular chaos \citep[see, e.g.,][]{L07}. That is because it is plausible that, for typical $X(0)\in\Gamma_\nu$, $X(t)$ contains very special correlations in the exact positions and velocities of all particles concerning the collisions before $t$, but not concerning those after $t$. Likewise, we would expect for a general macroscopic system, unless the dynamics (as given by the Hamiltonian) is specially contrived for the partition $\{\Gamma_\nu\}$, that $X(t)\in\Gamma_{\nu'}$ coming from a typical $X(0)\in\Gamma_\nu$ behaves towards the future, but of course not towards the past, like a typical point in $\Gamma_{\nu'}$, as stated by the second law \eqref{2ndlaw}. The whole reasoning does not change significantly if $\Phi_t(\Gamma_\nu)$ is distributed over \emph{several} $\Gamma_{\nu_1'},\ldots,\Gamma_{\nu_\ell'}$ of much greater volume, instead of being contained in one $\Gamma_{\nu'}$.

Putting this consideration together with the past hypothesis, we are led to expect that the Boltzmann entropy of the universe keeps increasing (except for occasional insignificant entropy valleys) to this day, and further until the universe reaches thermal equilibrium. As a consequence for our present-day experiments:
\be\label{fact2}
\mbox{\begin{minipage}{0.85\textwidth}
{\it Development conjecture.} Given the past hypothesis, an isolated system that, at a time $t_0$ before thermal equilibrium of the universe, has macro state $\nu$ appears macroscopically in the future, but not the past, of $t_0$ like a system that at time $t_0$ is in a typical micro state compatible with $\nu$.
\end{minipage}}
\ee
This statement follows from Lanford's theorem for times up to $\overline{t}/5$, but otherwise has not been proven mathematically; it summarizes the presumed implications of the past hypothesis (i.e., of the low entropy initial state of the universe) to applications of statistical mechanics. For a dilute gas, it predicts that its macro state will evolve according to the Boltzmann equation in the future, but not the past, of $t_0$, as long as it is isolated. It also predicts that heat will not flow from the cooler body to the hotter, and that a given macroscopic object will not spontaneously fly into the air although the laws of mechanics would allow that all the momenta of the thermal motion at some time all point upwards.

By the way, the development conjecture allows us to make sense of Tolman's \citeyearpar[\S 23, p.~60]{Tol38} ``hypothesis of equal a priori probabilities,'' which asserts  
\begin{quotation}
that the phase point for a given system is just as likely to be in one region of the phase space as in any other region of the same extent which corresponds equally well with what knowledge we do have as to the condition of the system.
\end{quotation}
That sounds like $X$ is always uniformly distributed over the $\Gamma_\nu$ containing $X$, but that statement is blatantly inconsistent, as it cannot be true at two times, given that $\Phi_t(\Gamma_\nu)\neq \Gamma_{\nu'}$ for any $\nu'$. But the subtly different statement \eqref{fact2} is consistent, and \eqref{fact2} is what Tolman should have written.

\bigskip

The past hypothesis brings out clearly that the origin of the thermodynamic arrow of time lies in a special property (of lying in $\Gamma_{\nu_0}$) of the physical state of the matter at the time $T_0$ of the big bang, and not in the observers' knowledge or their way of considering the world. The past hypothesis is the one crucial assumption we make in addition to the dynamical laws of classical mechanics. The past hypothesis may well have the status of a law of physics---not a dynamical law but a law selecting a set of admissible histories among the solutions of the dynamical laws. As \citet[p.~116]{Fey65} wrote:
\begin{quotation}
Therefore I think it is necessary to 
add to the physical laws the hypothesis that in the past the universe 
was more ordered, in the technical sense, than it is today---I think this is the additional 
statement that is needed to make sense, and to make an understanding 
of the irreversibility.
\end{quotation}

Making the past hypothesis explicit in the form \eqref{PH} or a similar one also enables us to understand the question whether the past hypothesis could be \emph{explained}. \citet[\S 90]{Bol1898} suggested tentatively that the explanation might be a giant fluctuation out of thermal equilibrium, an explanation later refuted by \citet{Edd31} and Feynman \citeyearpar{Fey65,Fey95}. (Another criticism of this explanation put forward by \citet[\S 35]{Pop76} is without merit.) Some explanations of the past hypothesis have actually been proposed in  highly idealized models \citep{Car,BKM13,BKM14,BKM15,GTZ16}; it remains to be seen whether they extend to more adequate cosmological models.

\subsection{Boltzmann Entropy Is Not Always $H$}
\label{sec:objections}

\citet{Jay65} wrote that
\begin{quotation}
the Boltzmann $H$ yields an ``entropy'' that is in error by a nonnegligible amount whenever interparticle forces affect thermodynamic properties.
\end{quotation}
It is correct that the $H$ functional represents (up to a factor $-\kk$) the Boltzmann entropy \eqref{SBdef} only for a gas of non-interacting or weakly interacting molecules. Let us briefly explain why; a more extensive discussion is given in \citep{GL04}. 

As pointed out in Section~\ref{sec:macrostates}, the coarse grained energy \eqref{M1} should be one of the macro variables, so that the partition into macro sets $\Gamma_\nu$ provides a partition of the energy shell. If interaction cannot be ignored, then the $H$ functional does 
not correspond to the Boltzmann entropy, since restriction to the energy 
shell is not taken into account by $H$. When interaction 
can be ignored there is only kinetic energy, so the Boltzmann 
macro states based on the empirical distribution alone determine the 
energy and hence the $H$ functional corresponds to the Boltzmann entropy.

\section{Gibbs Entropy as Fuzzy Boltzmann Entropy}
\label{sec:agree}

This section is about another connection between Gibbs and Boltzmann entropy that is not usually made explicit in the literature; it involves interpreting the Gibbs entropy in an individualist spirit as a variant of the Boltzmann entropy for ``fuzzy'' macro sets. That is, we now describe a kind of Gibbs entropy that is not an ensemblist entropy (as the Gibbs entropy usually is) but an individualist entropy, which is better. This possibility is not captured by any of the options (a)--(c) of Section~\ref{sec:xrho}.

By a fuzzy macro set we mean using functions $\gamma_\nu(x)\geq 0$ instead of sets $\Gamma_\nu$ as expressions of a macro state $\nu$: some phase points $x$ look a lot like $\nu$, others less so, and $\gamma_\nu(x)$ quantifies how much. The point here is to get rid of the sharp boundaries between the sets $\Gamma_\nu$ shown in Figure~\ref{fig:phasespace}, as the boundaries are artificial and somewhat arbitrary anyway. A partition into sets $\Gamma_\nu$ is still contained in the new framework as a special case by taking $\gamma_\nu$ to be the indicator function of $\Gamma_\nu$, $\gamma_\nu = 1_{\Gamma_\nu}$, but we now also allow continuous functions $\gamma_\nu$. One advantage may be to obtain simpler expressions for the characterization of $\nu$ since we avoid drawing boundaries. The condition that the $\Gamma_\nu$ form a partition of $\X$ can be replaced by the condition that
\be\label{sumgammanu}
\sum_\nu \gamma_\nu(x) = 1~~~~\forall x\in\X.
\ee
Another advantage of this framework is that we can allow without further ado that $\nu$ is a continuous variable, by replacing \eqref{sumgammanu} with its continuum version
\be\label{intgammanu}
\int d\nu \, \gamma_\nu(x) = 1~~~~\forall x\in\X.
\ee
It will sometimes be desirable to normalize the function $\gamma_\nu$ so its integral becomes 1; we write $\gamma^1_\nu$ for the normalized function,
\be
\gamma^1_\nu(x) = \frac{\gamma_\nu(x)}{\|\gamma_\nu\|_1} ~~~\text{with}~~~\|\gamma_\nu\|_1 :=\int_\X \!\! dx' \, \gamma_\nu(x')\,.
\ee
So what would be the appropriate generalization of the Boltzmann entropy to a fuzzy macro state? It should be $\kk$ times the log of the volume over which $\gamma_\nu^1$ is effectively distributed---in other words, the Gibbs entropy,
\be
S_\B(\nu) := S_\G(\gamma^1_\nu)\,.
\ee

Now fix a phase point $x$. Since $x$ is now not uniquely associated with a macro state, it is not clear what $S_\B(x)$ should be. In view of \eqref{intgammanu}, one might define $S_\B(x)$ to be the average $\int d\nu \, S_\B(\nu) \, \gamma_\nu(x)$. Be that as it may, if the choice of macro states $\gamma_\nu$ is reasonable, one would expect that different $\nu$s for which $\gamma_\nu(x)$ is significantly non-zero have similar values of $S_\B(\nu)$, except perhaps for a small set of exceptional $x$s.

Another advantage of fuzzy macro states is that they sometimes factorize in a more convenient way. Here is an example. To begin with, sometimes, when we are only interested in thermal equilibrium macro states, we may want to drop non-equilibrium macro states and replace the equilibrium set $\Gamma_\eq$ in an energy shell $\X_{(E-\Delta E,E]}$ by the full energy shell, thereby accepting that we attribute wildly inappropriate $\nu$s (and $S_\B$s) to a few $x$s. As a second step, the canonical distribution
\be
\rho_\can(\beta,x) = \frac{1}{Z} e^{-\beta H(x)}
\ee
is strongly concentrated in a very narrow range of energies, a fact well known as ``equivalence of ensembles'' (between the micro-canonical and the canonical ensemble). Let us take $\nu=\beta$ and $\gamma^1_\nu(x)=\rho_\can(\beta,x)$ as a continuous family of fuzzy macro states. As a third step, consider a system consisting of two non-interacting subsystems, $\sys=\sys_1 \cup \sys_2$, so $\X=\X_1 \times \X_2$, $x=(x_1,x_2)$ and $H(x) = H_1(x_1) + H_2(x_2)$. Then $\gamma^1_\beta$ factorizes,
\be
\gamma^1_\beta(x) = \gamma^1_{1,\beta}(x_1) \, \gamma^1_{2,\beta}(x_2)\,,
\ee
whereas the energy shells do not,
\be\label{neq}
\X_{(E-\Delta E,E]} \neq 
\X_{1,(E_1-\Delta E_1,E_1]} \times \X_{2,(E_2-\Delta E_2,E_2]}\,,
\ee
because a prescribed total energy $E$ can be obtained as a sum $E_1+E_2$ for very different values of $E_1$ through suitable choice of $E_2$, corresponding to different macro states for $\sys_1$ and $\sys_2$. One particular splitting $E=E_1+E_2$ will have the overwhelming majority of phase space volume in $\X_{(E-\Delta E,E]}$; it is the splitting that maximizes $S_1(E_1)+S_2(E_2)$ under the constraint $E_1+E_2=E$, and at the same time the one corresponding to the same $\beta$ value, i.e., with $E_i$ the expectation of $H_i(x_i)$ under the distribution $\gamma^1_{i,\beta}$ ($i=1,2$). So equality in \eqref{neq} fails only by a small amount, in the sense that the symmetric difference set between the left and right-hand sides has small volume compared to the sets themselves. In short, in this situation, the use of sets $\Gamma_\nu$ forces us to consider certain non-equilibrium macro states, and if we prefer to introduce only equilibrium mecro states, then the use of fuzzy macro states $\gamma_\nu$ is convenient.

\bigskip

The quantum analog of fuzzy macro states consists of, instead of the orthogonal decomposition \eqref{decomp} of Hilbert space, a POVM (positive-operator-valued measure) $\hat G$ on the set $M$ of macro states. That is, we replace the projection $\hat P_\nu$ to the macro space $\Hilbert_\nu$ by a positive operator $\hat G_\nu$ with spectrum in $[0,1]$ such that $\sum_\nu \hat G_\nu = \hat I$, where $\hat I$ is the identity operator on $\Hilbert$. The eigenvalues of $\hat G_\nu$ would then express how much the corresponding eigendirection looks like $\nu$. Set $\hat G^1_\nu := (\tr \hat G_\nu)^{-1} \hat G_\nu$; then $\hat G^1_\nu$ is a density matrix, and the corresponding entropy value would be
\be
S_\qB(\nu) := S_\vN(\hat G^1_\nu)\,.
\ee
As before when using the $\Hilbert_\nu$, a quantum state $\psi$ may be associated with several very different $\nu$s. As discussed around \eqref{dontoverlap}, this problem presumably disappears in Bohmian mechanics and collapse theories.

\section{The Status of Ensembles}
\label{sec:status}

If we use the Boltzmann entropy, then the question of what the $\rho$ in the Gibbs entropy means does not come up. But a closely related question remains: What is the meaning of the Gibbs ensembles (the micro-canonical, the canonical, the grand-canonical) in Boltzmann's ``individualist'' approach? This is the topic of the present section \citep[see also][]{Gol19}.

By way of introduction to this section, we can mention that Wallace, an ensemblist, feels the force of arguments against subjective entropy but thinks that there is no alternative. He wrote \citep[Sec.~10]{Wal18}:
\begin{quotation}
It will be objected by (e.g.)\ Albert and Callender that when we say ``my coffee will almost certainly cool to room temperature if I leave it'' we are saying something objective about the world, not something about my beliefs. I agree, as it happens; that just tells us that the probabilities of statistical mechanics cannot be interpreted epistemically. And then, of course, it is a mystery how they can be interpreted, given that the underlying dynamics is deterministic [\ldots]. But (on pain of rejecting a huge amount of solid empirical science [\ldots]) some such interpretation must be available.
\end{quotation}
The considerations of this section may be helpful here.

It is natural to call any measure (or density function $\rho$) that is normalized to 1 a ``probability distribution.'' But now we need to distinguish more carefully between different roles that such a measure may play:
\begin{itemize}
\item[(i)] frequency in repeated preparation: The outcome of an experiment is unpredictable, but will follow the distribution $\rho$ if we repeat the experiment over and over. (This kind of probability could also be called \emph{genuine probability} or \emph{probability in the narrow sense}.)
\item[(ii)] degree of belonging: As in Section~\ref{sec:agree}, $\rho$ could represent a fuzzy set; then $\rho(x)$ indicates how strongly $x$ belongs to this fuzzy set.
\item[(iiii)] typicality: This is how the flat distribution over a certain $\Gamma_\nu$ enters the past hypothesis \eqref{PH} and the development conjecture \eqref{fact2}.
\end{itemize}

\subsection{Typicality}
\label{sec:typicality}

A bit more elucidation of the concept of typicality may be useful here, as the difference between typicality and genuine probability is subtle. A feature or behavior is said to be \emph{typical} in a set $S$ if it occurs for most (i.e., for the overwhelming majority of) elements of $S$. Let us elaborate on this by means of an example. The digits of $\pi$, 314159265358979\ldots, look very much like a random sequence although there is nothing random about the number $\pi$, as it is uniquely defined and thus fully determined. Also, there is no way of ``repeating the experiment'' that would yield a different sequence of digits; we can only do statistics about the digits in this one sequence. We can set up reasonable criteria for whether a given sequence (finite or infinite) ``looks random,'' such as whether the relative frequency of each digits is, within suitable tolerances, 1/10, and that of each $k$-digit subsequence $1/10^k$ (for all $k$ much smaller than the length of the sequence). Then the digits of $\pi$ will presumably pass the criteria, thereby illustrating the fine but relevant distinction between ``looking random'' and ``being random.'' 

Looking random is an instance of typical behavior, as most sequences of digits of given length look random and most numbers between 3 and 4 have random-looking decimal expansions---``most'' relative to the uniform distribution, also known as Lebesgue measure. So with respect to a certain behavior (say, the frequencies of $k$-digit subsequences), $\pi$ presumably is like most numbers.

Similarly, with respect to another sort of behavior (say, the future macroscopic history of the universe), the initial phase point of the universe is presumably like most points in $\Gamma_{\nu_0}$, as demanded by the past hypothesis \eqref{PH}; and according to the development conjecture \eqref{fact2}, the phase point of a closed system now is like most points in its macro set.

In contrast, when considering a random experiment in probability theory, we usually imagine that we can repeat the experiment, with relative frequencies in agreement with the distribution $\rho$; that is clearly impossible for the universe as a whole (or, if it were possible, irrelevant, because we are not concerned with what happens in other universes). But even if an experiment, such as the preparation of a macroscopic body (say, a gas), can be repeated a hundred or a thousand times, the frequency distribution of the sample points (i.e., the empirical distribution \eqref{fempdef}) consists of a mere $10^3$ delta peaks and thus is far from resembling a continuous distribution in a phase space of $10^{23}$ dimensions, so we do not have much basis for claiming that it is essentially the same as the micro-canonical or the canonical or any other common ensemble $\rho$. On the other hand, even a single sample point could meaningfully be said to be typical with respect to a certain high-dimensional distribution $\rho$ (and a certain concept of which kinds of ``behavior'' are being considered). What we are getting at is that Gibbs's ensembles are best understood as measures \emph{of typicality, not of genuine probability}.

For example, there is a subtle difference between the Maxwell-Boltzmann distribution and the canonical distribution. The former is the empirical distribution of (say) $10^{23}$ points in 6-dimensional 1-particle phase space $\X_1$, while the latter is a distribution of a single or a few points in high-dimensional phase space. The former is genuine probability, the latter typicality. This contrast is also related to the contrast between empirical and marginal distribution: On the one hand, the canonical distribution of a system $\sys_1$ is often derived as a marginal distribution of the micro-canonical distribution of an even bigger system $\sys_1\cup \sys_2$, and a marginal of a typicality distribution is still a typicality distribution. On the other hand, the Maxwell-Boltzmann distribution is the empirical distribution that arises from most phase points relative to the micro-canonical distribution.  If $\rho$-most phase points lead to the empirical distribution $f$, then that explains why we can regard $f$ as a \emph{genuine probability} distribution, in the sense of being objective.

\subsection{Equivalence of Ensembles and Typicality}
\label{sec:eqens}

It is not surprising that typicality and genuine probability can easily be conflated, and that various conundrums can arise from that. For example, the ``mentaculus'' of Albert and Loewer \citep{Alb15,Loe19} is a view based on understanding the past hypothesis \eqref{PH} with genuine probability instead of typicality. That is, in this view one considers the uniform distribution over the initial macro set $\Gamma_{\nu_0}$ and regards it as defining ``the'' probability distribution of the micro history $t\mapsto X(t)$ of the universe, in particular as assigning a precisely defined value in $[0,1]$ as the probability to every conceivable event, for example the event that intelligent life forms evolve on Earth and travel to the moon. 

In contrast, the typicality approach used in \eqref{PH} does not require or claim that the initial phase point $X(T_0)$ of the universe is random, merely that it looks random (just as $\pi$ merely looks random). Furthermore, the typicality view does not require a uniquely selected measure of typicality, just as $\pi$ is presumably typical relative to several measures: the uniform distribution on the interval $[3,4]$ or on $[0,10]$, or some Gaussian distribution. We elaborate on this aspect in the subsequent paragraphs, again through contrast with Albert and Loewer.

Loewer once expressed (personal communication 2018) that different measures $\rho_1\neq \rho_2$ for the initial phase point of the universe would necessarily lead to different observable consequences. That is actually not right; let us explain. To begin with, one may think that $\rho_1$ and $\rho_2$ must lead to different observable consequences if one \emph{defines} (following the mentaculus) the probability of any event $E$ at time $t$ as $P(E)=\int_{\Phi_t^{-1}(E)}\rho_i$. However, a definition alone does not ensure that the quantity $P(E)$ has empirical significance; the question arises whether observers inside the universe can determine $P(E)$ from their observations. More basically, we need to ask whether observers can determine from their observations which of $\rho_1$ and $\rho_2$ was used as the initial distribution. To this end, let us imagine that the initial phase point $X(T_0)$ of the universe gets chosen truly randomly with distribution either $\rho_1$ or $\rho_2$. Now the same point $X(T_0)$ may be compatible with both distributions $\rho_1,\rho_2$ if they overlap appropriately. Since all observations must be made in the one history arising from $X(T_0)$, and since the universe cannot be re-run with an independent initial phase point, it is impossible to decide empirically whether $\rho_1$ or $\rho_2$ was the distribution according to which $X(T_0)$ was chosen. Further thought in this direction shows that observers can only reliably distinguish between $\rho_1$ and $\rho_2$ if they can observe an event that has probability near 1 for $\rho_1$ and near 0 for $\rho_2$ or vice versa; that is, they can only determine the typicality class of the initial distribution. (As a consequence for the mentaculus, its main work, which is to efficiently convey hard core facts about the world corresponding to actual patterns of events over space and time, is done via typicality, i.e., concerns only events with measure near 0 or near 1.)

In the typicality view there does not have to be a fact about which distribution $\rho$ is ``the right'' distribution of the initial phase point of the universe; that is because points typical relative to $\rho_1$ are also typical relative to $\rho_2$ and vice versa, if $\rho_1$ and $\rho_2$ are not too different. Put differently, in the typicality view there is a type of equivalence of ensembles, parallel to the well known fact that one can often replace a micro-canonical ensemble by a canonical one without observable consequences. There is room for different choices of $\rho$, and this fits well with the fact that $\Gamma_{\nu_0}$ has boundaries with some degree of arbitrariness.

\subsection{Ensemblist vs.\ Individualist Notions of Thermal Equilibrium}

The two notions of Gibbs and Boltzmann entropy are parallel to two notions of thermal equilibrium. In the view that we call the ensemblist view, a system is in thermal equilibrium if and only if its phase point $X$ is random with the appropriate distribution (such as the micro-canonical or the canonical distribution). In the individualist view, in contrast, a system is in thermal equilibrium (at a given energy) if and only if its phase point $X$ lies in a certain subset $\Gamma_{\eq}$ of phase space. 

The definition of this set may be quite complex; for example, according to the partition due to Boltzmann described in \eqref{occupation}, the set $\Gamma_{\eq}$ contains those phase points for which the relative number of particles in each cell $C_i$ of $\X_1$ agrees, up to tolerance $\Delta f$, with the content of $C_i$ under the Maxwell-Boltzmann distribution. In general, the task of selecting the set $\Gamma_{\eq}$ has similarities with that of selecting the set $S$ of those sequences of (large) length $L$ of decimal digits that ``look random.'' While a reasonable test of ``looks random'' can be designed (such as outlined in the first paragraph of Section~\ref{sec:typicality} above), different scientists may well come up with criteria that do not exactly agree, and again there is some arbitrariness in where exactly to ``draw the boundaries.'' Moreover, such a test or definition $S$ is often unnecessary for practical purposes, even for designing and testing pseudo-random number generators. For practical purposes, it usually suffices to specify the distribution (here, the uniform distribution over all sequences of length $L$) with the understanding that $S$ should contain, in a reasonable sense, the typical points relative to that distribution. Likewise, we usually never go to the trouble of actually selecting $\Gamma_{\eq}$ and the other $\Gamma_\nu$: it often suffices to imagine that they \emph{could} be selected. Specifically, for thermal equilibrium, it often suffices to specify the distribution (here, the uniform distribution over the energy shell, that is, the micro-canonical distribution) with the understanding that $\Gamma_\eq$ should contain, in a reasonable sense, the typical points relative to that distribution (i.e., points that look macroscopically the way most points in $\X_\mc$ do). 

That is why, also for the individualist, specifying an ensemble $\rho$ can be a convenient way of describing a thermal equilibrium state: It would basically specify the set $\Gamma_\eq$ as the set of those points $x$ that look macroscopically like $\rho$-most points. For the same reasons as in Section~\ref{sec:eqens}, this does not depend sensitively on $\rho$: points typical relative to the micro-canonical ensemble near energy $E$ are also typical relative to the canonical ensemble with mean energy $E$.

\bigskip

Let us turn to the ensemblist view. The basic problem with the ensemblist definition of thermal equilibrium is the same as with the Gibbs entropy: a system has an $X$ but not a $\rho$. To say the least, it remains open what should be meant by $\rho$. Is it subjective? But whether or not a system is in thermal equilibrium is not subjective. Does that mean that in distant places with no observers around, no system can be in thermal equilibrium? That does not sound right. 

What if $\rho$ corresponded to a repeated preparation procedure? That is problematical as well. Suppose such a procedure produced a random phase point $X(0)$ that is uniformly distributed in some set $M$; now let time evolve up to $t$; the resulting phase point $X(t)$ is uniformly distributed in $\Phi_t(M)$, the associated Gibbs entropy is constant, and so the distribution of $X(t)$ is far from thermal equilibrium. In other words, the system never approaches thermal equilibrium. That does not sound right. We will come back to this point in Section~\ref{sec:indep}.

Moreover, thermal equilibrium should have something to do with physical behavior, which depends on $X$. Thermal equilibrium should mean that the temperature, let us say the energy per degree of freedom, is constant (within tolerances) throughout the volume of the system, and that is a property of $X$. Thermal equilibrium should mean that the values of macro variables, which are functions on phase space, are constant over time, and that is true of some $X$ and not of others.

\subsection{Ergodicity and Mixing}
\label{sec:ergodic}

Ergodic and mixing are two kinds of chaoticity properties of a dynamical system $\Y$ such as an energy surface in phase space. A measure-preserving dynamical system in $\Y$ is called ergodic if almost every trajectory spends time, in the long run, in all regions of $\Y$ according to their volumes. Equivalently, a system is ergodic if time averages almost surely equal phase averages. A measure-preserving dynamical system in $\Y$ is called mixing if for all subsets $A,B$ of $\Y$, $\vol(A \cap \Phi_t(B))$ tends to $\vol(A) \, \vol(B)/\vol(\Y)$ as $t\to \infty$. In other words, mixing means that for large $t$, $\Phi_t(B)$ will be deformed into all regions of phase space according to their volumes.

It is sometimes claimed (e.g., \citet{Mac89}) that ergodicity or mixing are crucial for statistical mechanics. We would like to explain now why that is not so---with a few caveats; see also \citet{CCCV16}.

\begin{itemize}
\item Mixing would seem relevant to the ensemblist approach to thermal equilibrium because mixing implies that the uniform distribution over $\Phi_t(B)$ converges setwise to the uniform distribution (Liouville measure) over the energy surface $\Y$ (and every density converges weakly to a constant) \citep{Kry,Rue}. However, for every finite $t$, the Gibbs entropy has not increased, so some ensemblists are pushed to say that the entropy does not increase within finite time, but jumps at $t=\infty$ to the thermal equilibrium value. Thus, some ensemblists are pushed to regard thermal equilibrium as an idealization that never occurs in the real world. That does not seem like an attractive option.

\item The fact that the observed thermal equilibrium value $a_\eq$ of a macro variable $A$ coincides with its micro-canonical average $\langle A \rangle$ has sometimes been explained through the following reasoning \citep[e.g.,][\S~9]{Khi}: {\it Any macro measurement takes a time that is long compared to the time that collisions take or the time of free flight between collisions. Thus, it can be taken to be infinite on the micro time scale. Thus, the measured value is actually not the value of $A(X(t))$ at a particular $t$ but rather its time average. By ergodicity, the time average is equal to the phase average, QED.}

This reasoning is incorrect. In fact, ergodicity is neither necessary nor sufficient for $a_{\eq}=\langle A \rangle$. Not necessary because $A$ is nearly constant over the energy surface, so most phase points will yield a value close to $\langle A \rangle$ even if the motion is not ergodic. And not sufficient because the time needed for the phase point of an ergodic system to explore an energy surface is of the order $10^N$ years and thus much longer than the duration of the measurement. This point is also illustrated by the fact that in a macro system with non-uniform temperature, you can clearly measure unequal temperatures in different places with a thermometer faster than the temperature equilibrates.

\item Still, ergodicity and similar properties are not completely unrelated to thermodynamic behavior. Small interactions between different parts of the system are needed to drive the system towards a dominant macro set (i.e., towards thermal equilibrium), and the same small interactions often also have the consequence of making the dynamics ergodic. That is, ergodicity is neither cause nor consequence of thermodynamic behavior, but the two have a common cause.

\item In addition, and aside from thermodynamic behavior or entropy increase, mixing and similar properties play a role for the emergence of macroscopic randomness. For example, suppose a coin is tossed or a die is rolled; chaotic properties of the dynamics ensure that the outcome as a function of the initial condition varies over a very small scale on phase space, with the consequence that a probability density over the initial conditions that varies over a larger scale will lead to a uniform probability distribution of the outcome. (For more extensive discussion see \citep[Sec.~4.1]{DT}.)

\item Boltzmann himself considered ergodicity, but not for the purposes above. He used it as a convenient assumption for estimating the mean free path of a gas molecule \citep[\S 10]{Bol1898}.
\end{itemize}

\subsection{Remarks}

Of course, Boltzmannian individualists use probability, too. There is no problem with considering procedures that prepare random phase points, or with using Gibbs ensembles as measures of typicality. 

When arguing against the individualist view, \citet[Sec.~4]{Wal18} mentioned as an example that transport coefficients (such as thermal conductivity) can be computed using the two-time correlation function
\be
C_{ij}(t) = \langle X_i(t) X_j(0)\rangle\,,
\ee
where $\langle \cdot \rangle$ means averaging while taking $X(0)$ as random with micro-canonical distribution. Wallace wrote:
\begin{quotation}
since $C(t)$ is an explicitly probabilistic quantity, it is not even defined on the Boltzmannian approach.
\end{quotation}
Actually, that is not correct. The individualist will be happy as soon as it is shown that for most phase points in $\X_\mc$, the rate of heat conduction is practically constant and can be computed from $C(t)$ in the way considered.

Another objection of Wallace's \citeyearpar[Sec.~5]{Wal18} concerns a system $\sys$ for which (say) two macro sets, $\Gamma_{\nu_1}$ and $\Gamma_{\nu_2}$, each comprise nearly half of the volume of the energy shell, so that the thermal equilibrium state is replaced by two macro states $\nu_1$ and $\nu_2$, as can occur with a ferromagnet (see Footnote~\ref{fn:ferro} in Section~\ref{sec:macrostates}). Wallace asked how an individualist can explain the empirical observation that the system is, in the long run, equally likely to reach $\nu_1$ or $\nu_2$. To begin with, if a preparation procedure leads to a random phase point with distribution $\rho$, then $\int_{\Phi_t^{-1}(\Gamma_{\nu_1})}\rho$ is the probability of $\sys$ being in $\nu_1$ at time $t$. As mentioned in Section~\ref{sec:ergodic} (penultimate bullet), a broad $\rho$ will lead to approximately equal probabilities for $\nu_1$ and $\nu_2$. This leads to the question why practical procedures lead to broad $\rho$s, and that comes from typicality as expressed in the development conjecture. Put differently, for a large number of identical ferromagnets it is typical that about half of them are in $\nu_1$ and about half of them in $\nu_2$.

\section{Second Law for the Gibbs Entropy?}
\label{sec:indep}

To the observation that the Gibbs entropy is constant under the Liouville evolution \eqref{Liouville}, different authors have reacted in different ways. We describe some reactions along with some comments of ours. (Of course, our basic reservations about the Gibbs entropy are those we expressed in Section~\ref{sec:subjective}.) 
\begin{itemize}
\item \citet{Khi} thought it was simply wrong to say that entropy increases in a closed system. He thought that the second law should merely assert that when a system is in thermal equilibrium and a constraint is lifted, then the new thermal equilibrium state has higher entropy than the previous one.

With this attitude one precludes the explanation of many phenomena. 

\item \citet{Rue} expressed the view that entropy stays constant at every finite time but increases at $t=\infty$ due to mixing (see Section~\ref{sec:ergodic} above).

It remains unclear how this view can be applied to any real-world situation, in which all times are finite.

\item Some authors (e.g., \citet{Mac89}) suggested considering a system $\sys_1$ perturbed by external noise (in the same spirit \citet{Kem39}). In the framework of closed Hamiltonian systems of finite size, this situation could be represented by interaction with another system $\sys_2$. It can be argued that if $\sys_2$ is large and if initially $\rho= \rho_1(x_1) \, \rho_2(x_2)$ with $\rho_2$ a thermal equilibrium distribution, then the marginal distribution
\be
\rho_1^\mathrm{marg}(x_1,t) := \int_{\X_2} dx_2 \, \rho(x_1,x_2,t)
\ee
will follow an evolution for which $S_\G(\rho_1^\mathrm{marg})$ increases with time, and one might want to regard the latter quantity as the entropy of $\sys_1$.

This attitude has some undesirable consequences: First, if the entropy of $\sys_1$ increases and that of $\sys_2$ increases then, since that of $\sys_1\cup \sys_2$ is constant, entropy is not additive. Second, since it is not true of \emph{all} $\rho$ on $\X_1\times \X_2$ that $S_\G(\rho_1^\mathrm{marg})$ increases with time, the question arises for which $\rho$ it will increase, and why real-world systems should be of that type.

\item Some authors \citep[e.g.,][\S~51]{Wal18,Tol38} have considered a partition of (an energy shell in) phase space $\X$, such as the $\Gamma_\nu$, and suggested coarse graining any $\rho$ according this partition.\footnote{\citet{Tol38} had in mind a partition into cells of equal size, but we will simply use the $\Gamma_\nu$ in the following.} That is, the coarse-grained $\rho$ is the function $P\rho$ whose value at $x\in \Gamma_\nu$ is the average of $\rho$ over $\Gamma_\nu$,
\be
P\rho (x) = \frac{1}{\vol \Gamma_\nu }\int_{\Gamma_\nu} \!\! dy \, \rho(y) \,.
\ee
$P$ is a projection to the subspace of $L^1(\X)$ of functions that are constant on every $\Gamma_\nu$. 
The suggestion is that thermodynamic entropy $S_t$ is given by 
\be\label{SGPrhot}
S_t:=S_\G(P\rho_t)\,,
\ee
with $\rho_t$ evolving according to the Liouville equation \eqref{Liouville}, and that this $S_t$ tends to increase with time. 

It is plausible that $S_t$ indeed tends to increase, provided that the time evolution has good mixing properties (so that balls in phase space quickly become ``spaghetti'') and that the scale of variation of $\rho_0$ is not small; see also \citet{FPPRV07}.
(The argument for the increase of \eqref{SGPrhot} given by \citet{Tol38} is without merit. After correctly showing (p.~169) that, for every distribution $\rho$,
\be\label{SGP}
S_\G(P\rho) \geq S_\G(\rho)\,,
\ee
Tolman assumes that $\rho_0$ is uniform in each cell, i.e., $P\rho_0=\rho_0$, and then points out that $S_\G(P\rho_t)\geq S_\G(P\rho_0)$ for all $t$. This is true, but it is merely a trivial consequence of \eqref{SGP} and \eqref{SGt} owed to the special setup for which $P\rho_0=\rho_0$, and not a general statement of entropy increase, as nothing follows about whether $S_\G(P\rho_{t_2}) > S_\G(P \rho_{t_1})$ for arbitrary $t_2>t_1$.)

\item Some authors have argued that the Liouville evolution \eqref{Liouville} is not appropriate for evolving $\rho$.

That depends on what $\rho$ means. If we imagine a preparation procedure that produces a random phase point $X(0)$ with probability distribution $\rho_0$, then letting the system evolve up to time $t$ will lead to a random phase point $X(t)$ with distribution $\rho_t$ given by the Liouville evolution. However, if we think of $\rho$ as representing an observer's belief or knowledge, the situation may be different. In the last two bullet points, we consider some alternative ways of evolution.

\item Let $L$ be the linear differential operator such that $L\rho$ is the right-hand side of the Liouville equation \eqref{Liouville}; then the Liouville evolution is $\rho_t=e^{Lt}\rho_0$. The proposal \eqref{SGPrhot} could be re-expressed by saying that when using the Gibbs entropy, $S_t=S_\G(\rho_t)$, the appropriate evolution for $\rho$ should be
\be\label{PeLt}
\rho_t = P e^{Lt}\rho_0~~~~\text{for } t>0 \,.
\ee

\item Instead of \eqref{PeLt}, some authors \citep{Mac89} suggested that
\be\label{PLPLPL}
\rho_t = \lim_{n\to\infty} (Pe^{Lt/n})^n \rho_0~~~~\text{for } t>0  \,.
\ee
This evolution corresponds to continually coarse graining $\rho$ after each infinitesimal time step of Liouville evolution.

Then the Gibbs entropy $S_\G(\rho_t)$ is indeed non-decreasing with $t$, in fact without exception, as a consequence of \eqref{SGP} and \eqref{SGt}. If the macro evolution is deterministic, then it should agree with the Boltzmann entropy. A stochastic macro evolution would yield additional contributions to the Gibbs entropy not present in the Boltzmann entropy (contributions arising from spread over many macro sets), but these contributions are presumably tiny. Furthermore, it seems not implausible that $\rho_t$ is often close to $Pe^{Lt}P\rho_0$, which implies that $S_\G(\rho_t)$ is close to the expectation of $S_\B(X_t)$ with $X_0$ random with distribution $P\rho_0$. However, as discussed in Section~\ref{sec:wrongvalues}, one should distinguish between the entropy and the average entropy, so $S_\G(\rho_t)$ is sometimes not quite the right quantity.
\end{itemize}

\section{Further Aspects of the Quantum Case}
\label{sec:qm}

Much of what we have discussed in the classical setting re-appears mutatis mutandis in the quantum case, and we will focus now on the differences and specialties of the quantum case. We are considering a quantum system of a macroscopic number $N$ (say, $>10^{23}$) of particles. Its Hilbert space can be taken to be finite-dimensional because we can take the systems we consider to be enclosed in a finite volume of 3-space, which leads to a discrete spectrum of the Hamiltonian and thus only finitely many eigenvalues in every finite energy interval. As a consequence, if we cut off high energies at an arbitrary level, or if we consider an energy shell $\Hilbert_{(E-\Delta E,E]}$, we are left with a finite-dimensional Hilbert space of relevant pure states.

\subsection{Macro Spaces}

The orthogonal decomposition $\Hilbert= \oplus_\nu \Hilbert_\nu$ into macro spaces can be thought of as arising from macro observables as follows, following \citet{vN29}. If $\hat A_1, \ldots, \hat A_K$ are quantum observables (self-adjoint operators on $\Hilbert$) that should serve as macro observables, we will want to coarse grain them according to the resolution $\Delta M_j$ of macroscopic measurements by introducing $\hat B_j = [\hat A_j/\Delta M_j]\Delta M_j$ (with $[\cdot]$ again the nearest integer), thereby grouping nearby eigenvalues of $\hat A_j$ together into a single, highly degenerate eigenvalue of $\hat B_j$. Due to their macroscopic character, the commutators $[\hat B_j, \hat B_k]$ will be small; as \citet{vN29} argued and \citet{Og} justified, a little correction will make them commute exactly, $\hat M_j \approx \hat B_j$ and $[\hat M_j, \hat M_k]=0$. Since a macro description $\nu$ of the system can be given by specifying the eigenvalues $(m_1,\ldots, m_K)$ of all $\hat M_1,\ldots,\hat M_K$, their joint eigenspaces are the macro spaces $\Hilbert_\nu$. They have dimensions of order $10^{10^{23}}$.

\subsection{Some Historical Approaches}
\label{sec:historical}

\citet{Pau} tried to derive entropy increase from quantum mechanics. In fact, he described several approaches. In \S 1 of his paper, he developed a modified Boltzmann equation that takes Fermi-Dirac and Bose-Einstein statistics into account and makes sense in the individualist view; in \S 2 he took an ensemblist attitude and suggested a novel dynamics in the form of a Markov chain on the macro states $\nu$, for which he proved that the (analog of the) Gibbs entropy of the probability distribution increases with time; in \S 3 and 4, he argued (not convincingly, to our minds) that the distribution over $\nu$'s provided by an evolving quantum state is, to a good degree of approximation, a Markov chain. 

Returning to Pauli's \S 1, we note that his derivation of the modified Boltzmann equation is far less complete than Boltzmann's derivation of the Boltzmann equation from classical dynamics of hard spheres, and no validity theorem analogous to Lanford's has been proven for it, as far as we are aware. So Pauli's description has rather the status of a conjecture. According to it, the macro evolution $\nu(t)$ of a quantum gas with $N$ particles is deterministic for large $N$ (as with the Boltzmann equation), the expression for the entropy of $\nu$ agrees with the quantum Boltzmann entropy $S_{\qB}(\nu)$ as in \eqref{SqBdef}, and it increases with time as a mathematical consequence of the evolution equation for $\nu(t)$. This approach fits into the framework of an orthogonal decomposition $\Hilbert=\oplus_\nu \Hilbert_\nu$ that arises in a way analogous to Boltzmann's example \eqref{occupation} from an orthogonal decomposition of the 1-particle Hilbert space, $\Hilbert_1=\oplus_i \Kilbert_i$, by taking $\nu$ to be the list of occupation numbers (within tolerances) of each $\Kilbert_i$ and $\Hilbert_\nu$ the corresponding subspaces of $\Hilbert$.

\bigskip

Von Neumann famously came up with the von Neumann entropy \eqref{SvN}, the quantum analog of the Gibbs entropy. It is less known that in 1929 he defended a different formula because \citep[Sec.~1.3]{vN29}
\begin{quotation}
[t]he expressions for entropy [of the form $-\kk \tr \hat\rho \log \hat\rho$]
are not applicable here in the way they were 
intended, as they were computed from the perspective of an observer who can carry out all measurements that are possible in principle---i.e., regardless of whether they are macroscopic[.] 
\end{quotation}
We see here a mix of ensemblist and individualist thoughts in von Neumann's words. Be that as it may, the formula that von Neumann defended was, in our notation,
\be\label{SvN2}
S(\psi)=
-\kk\sum_{\nu} \|\hat P_\nu \psi\|^2 \log \frac{\|\hat P_\nu \psi\|^2 }{\dim \Hilbert_\nu}\,.
\ee
Similar expressions were advocated recently by Safranek et al.~\citeyearpar{SDA17,SDA18}. Note that this expression has a stronger individualist character than $-\kk \tr \hat\rho \log \hat\rho$, as it ascribes a non-trivial entropy value also to a system in a pure state. The problem with the expression is that it tends to average different entropy values where no averaging is appropriate. For example, suppose that $\psi$ has contributions from many but not too many macro spaces; since $\dim \Hilbert_\nu$ is very large, $\|\hat P_\nu \psi\|^2 \log \|\hat P_\nu \psi\|^2$ can be neglected in comparison, and then \eqref{SvN2} is just the weighted average of the quantum Boltzmann entropies \eqref{SqBdef} of different macro states, weighted with $\|\hat P_\nu \psi\|^2$. But in many situations, such as Schr\"odinger's cat, we would not be inclined to regard this average value as the true entropy value valid in all cases. For comparison, we would also not be inclined to define the body temperature of Schr\"odinger's cat as the average over all contributions to the wave function---we would say that the quantum state is a superposition of states with significantly different body temperatures. Likewise, we should say that the quantum state is a superposition of states with significantly different entropies.

\bigskip

In the same paper, \citet{vN29} proved two theorems that he called the ``quantum $H$ theorem'' and the ``quantum ergodic theorem.'' However, they are not at all quantum analogs of either the $H$ theorem or the ergodic theorem. In fact, the two 1929 theorems are very similar to each other (almost reformulations of each other), and they assert that for a typical Hamiltonian or a typical choice of the decomposition $\Hilbert=\oplus_\nu \Hilbert_\nu$, all initial pure states $\psi_0$ evolve such that for most $t$ in the long run, 
\be
\|\hat P_\nu \psi(t)\|^2 \approx \frac{\dim \Hilbert_\nu}{\dim \Hilbert}\,.
\ee
(See also \citet{GLTZ10}.) This is an interesting observation about the quantum evolution that has no classical analog; but it is far from establishing the increase of entropy in quantum mechanics.

\subsection{Entanglement Entropy}
\label{sec:entangle}

Let us come back to option (d) of Section~\ref{sec:SQM} above: the possibility that the density matrix $\hat\rho$ in $S_\vN$ should be the system's reduced density matrix. This approach has the undesirable consequence that entropy is not extensive, i.e., the entropy of $\sys_1 \cup \sys_2$ is not the sum of the entropy of $\sys_1$ and the entropy of $\sys_2$; after all, often the entropies of $\sys_1$ and $\sys_2$ will each increase while that of $\sys_1 \cup \sys_2$ is constant if that is a closed system (because the evolution is unitary). Moreover, it is not true of \emph{all} $\hat\rho$ on $\Hilbert_1\otimes \Hilbert_2$ that $S_\vN(\tr_2 \hat\rho)$ increases with time (think of time reversal).

A variant of option (d), and another widespread approach to defining the entropy of a macroscopic quantum system $\sys$ with (say) a pure state $\Psi$, is to use the entanglement entropy as follows: Divide $\sys$ into small subsystems $\sys_1,\ldots, \sys_M$, let $\hat \rho_j$ be the reduced density matrix of $\sys_j$,
\be\label{rhojdef}
\hat \rho_j := \tr_1 \cdots \tr_{j-1} \tr_{j+1} \cdots \tr_M |\Psi\rangle \langle \Psi|\,,
\ee
and add up the von Neumann entropies of $\hat\rho_j$,
\be\label{Sentdef}
S_\mathrm{ent}:=-\kk\sum_{j=1}^M \tr (\hat\rho_j \log \hat \rho_j)\,. 
\ee
Since the $j$-th term quantifies how strongly $\sys_j$ is entangled with the rest of $\sys$, it is known as the entanglement entropy of $\sys_j$. We will simply call also their sum \eqref{Sentdef} the entanglement entropy.

In thermal equilibrium, the entanglement entropy $S_\mathrm{ent}$ actually agrees with the thermodynamic entropy. That is a consequence of microscopic thermal equilibrium as described in Section~\ref{sec:thermalization}, and thus of canonical typicality \citep{GMM04,GLTZ06,PSW06}: For most $\Psi$ in an energy shell of $\sys$, assuming that the $\sys_j$ are not too large \citep{GHLT17}, $\hat\rho_j$ is approximately canonical, so its entanglement entropy agrees with its thermodynamic entropy, and extensivity implies the same for $\sys$.

This argument also shows that for $\sys$ in thermal equilibrium, $S_\mathrm{ent}$ does not depend much on how $\sys$ is split into subsystems $\sys_j$. On top of that, it also shows that $S_\mathrm{ent}$ still yields the correct value of entropy when $\sys$ is in \emph{local} thermal equilibrium, provided each $\sys_j$ is so small as to be approximately in thermal equilibrium. 

Many practical examples of non-equilibrium systems are in local thermal equilibrium. Still, for general non-equilibrium systems, $S_\mathrm{ent}$ may yield wrong values; an artificial example is provided by a product state $\Psi = \otimes_j \psi_j$ with each $\psi_j$ an equilibrium state of $\sys_j$, so $S_\mathrm{ent}=0$ while the thermodynamic entropy, by extensivity, should have its equilibrium value. Artificial or not, this example at least forces us to say that $S_\mathrm{ent}$ is not correct for \emph{all} $\Psi$; that $S_\mathrm{ent}$ is correct for \emph{most} $\Psi$ we knew already since most $\Psi$ are in thermal equilibrium.

Another issue with $S_\mathrm{ent}$ arises when $\sys$ is in a macroscopic superposition such as Schr\"odinger's cat. Then $S_\mathrm{ent}$ yields approximately the average of the entropy of a live cat and that of a dead cat, while the correct value would be \emph{either} that of a live cat \emph{or} that of a dead cat. In this situation one may want to conditionalize on the case of (say) the cat being alive, and replace $\Psi$ by the part of it corresponding to a live cat. Ultimately, this leads us to replace $\Psi$ by its projection to one $\Hilbert_\nu$, and if we are so lucky that states in $\Hilbert_\nu$ are in local thermal equilibrium, then the conditionalized $S_\mathrm{ent}$ will agree with $S_{\qB}(\nu)$. So at the end of the day, $S_{\qB}$ seems to be our best choice as the fundamental definition of entropy in quantum mechanics.

\subsection{Increase of Quantum Boltzmann Entropy}
\label{sec:SqBincrease}

Just as the classical macro sets $\Gamma_\nu$ have vastly different volumes, the quantum macro spaces $\Hilbert_\nu$ have vastly different dimensions, and as phase space volume is conserved classically, dimensions of subspaces are conserved under unitary evolution. 

It follows that if macro states $\nu$ follow an autonomous, deterministic evolution law $\nu\mapsto\nu_t$, then quantum Boltzmann entropy increases. In fact, 
\be
\text{if }e^{-i\hat H t}\Hilbert_\nu \subseteq \Hilbert_{\nu'},
\text{ then }S_\qB(\nu') \geq S_\qB(\nu).
\ee

More generally, not assuming that macro states evolve deterministically, already at this point it is perhaps not unreasonable to expect that in some sense the unitary evolution will carry a pure state $\psi_t$ from smaller to larger subspaces $\Hilbert_\nu$. In fact, it was proven \citep{GLMTZ10} that if one subspace has most of the dimensions in an energy shell, then $\psi_t$ will sooner or later come very close to that subspace (i.e., reach thermal equilibrium) and stay close to it for an extraordinarily long time (i.e., for all practical purposes, never leave it).

Proposition~\ref{prop:2times} below is another observation in the direction of Conjecture~\ref{conj:qB2ndlaw}, concerning only two times instead of an entire history and only two macro states $\nu,\nu'$. It is a quantum analog of the classical statement that if 
\be
\vol \Gamma_{\nu'} \ll \vol \Gamma_{\nu}\,,
\ee
that is, if $S_\B(\nu')$ is less than $S_\B(\nu)$ by an appreciable amount, then for any $t\neq 0$, most $X_0\in \Gamma_\nu$ are such that $X_t\notin \Gamma_{\nu'}$.

\begin{prop}\label{prop:2times}
Let $\hat H$ and $t\neq 0$ be arbitrary but fixed. If
\be
\dim \Hilbert_{\nu'} \ll \dim \Hilbert_{\nu}
\ee
(that is, if $S_\qB(\nu')$ is less than $S_\qB(\nu)$ by an appreciable amount), then for most $\psi_0\in \SSS(\Hilbert_\nu)$,
\be\label{2times}
\bigl\| \hat P_{\nu'} \psi_t\bigr\|^2 \ll 1\,.
\ee
\end{prop}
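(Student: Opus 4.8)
The plan is to control the \emph{average} of $\bigl\|\hat P_{\nu'}\psi_t\bigr\|^2$ over $\psi_0$ chosen uniformly (with respect to the $U(\dim\Hilbert_\nu)$-invariant measure) on $\SSS(\Hilbert_\nu)$, and then to pass from ``average small'' to ``small for most $\psi_0$'' by Markov's inequality. Write $d_\nu=\dim\Hilbert_\nu$, $d_{\nu'}=\dim\Hilbert_{\nu'}$, and $\hat U_t=e^{-i\hat Ht}$, so that $\psi_t=\hat U_t\psi_0$ and $\bigl\|\hat P_{\nu'}\psi_t\bigr\|^2=\bigl\langle\psi_0\big|\hat U_t^\dagger\hat P_{\nu'}\hat U_t\big|\psi_0\bigr\rangle$.

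First I would use the standard fact that, for $\psi_0$ uniform on the unit sphere of $\Hilbert_\nu$,
\be
\EEE\,\pr{\psi_0}=\frac{1}{d_\nu}\,\hat P_\nu\,;
\ee
indeed, the left-hand side is a positive operator supported on $\Hilbert_\nu$ and invariant under every unitary of $\Hilbert_\nu$, hence a multiple of $\hat P_\nu$, and taking the trace (which equals $1$) fixes the constant. Linearity of the trace then gives
\be
\EEE\,\bigl\|\hat P_{\nu'}\psi_t\bigr\|^2
=\tr\!\Bigl(\hat U_t^\dagger\hat P_{\nu'}\hat U_t\;\EEE\,\pr{\psi_0}\Bigr)
=\frac{1}{d_\nu}\,\tr\!\Bigl(\hat U_t^\dagger\hat P_{\nu'}\hat U_t\,\hat P_\nu\Bigr)\,.
\ee

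Next I would bound the trace. Since $\hat A:=\hat U_t^\dagger\hat P_{\nu'}\hat U_t$ is a positive operator and $0\leq\hat P_\nu\leq\hat I$, one has $\tr(\hat A\hat P_\nu)\leq\tr\hat A=\tr\hat P_{\nu'}=d_{\nu'}$. Therefore
\be
\EEE\,\bigl\|\hat P_{\nu'}\psi_t\bigr\|^2\leq\frac{d_{\nu'}}{d_\nu}\,,
\ee
a bound that is moreover completely uniform in $\hat H$ and in $t$. Markov's inequality then yields, for every $\varepsilon>0$,
\be
\PPP\Bigl(\bigl\|\hat P_{\nu'}\psi_t\bigr\|^2\geq\varepsilon\Bigr)\leq\frac{d_{\nu'}}{\varepsilon\,d_\nu}\,,
\ee
and taking, say, $\varepsilon=\sqrt{d_{\nu'}/d_\nu}$ shows that with probability at least $1-\sqrt{d_{\nu'}/d_\nu}$ we have $\bigl\|\hat P_{\nu'}\psi_t\bigr\|^2\leq\sqrt{d_{\nu'}/d_\nu}$. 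Under the hypothesis $\dim\Hilbert_{\nu'}\ll\dim\Hilbert_\nu$ the ratio $d_{\nu'}/d_\nu$ is tiny, so both the exceptional probability and the value of $\bigl\|\hat P_{\nu'}\psi_t\bigr\|^2$ off the exceptional set are tiny, which is exactly the assertion \eqref{2times}.

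I do not expect a genuine obstacle here; the only real care needed is the bookkeeping of what ``most'' and ``$\ll$'' are meant to signify, i.e.\ exhibiting, as above, a single small parameter $d_{\nu'}/d_\nu$ that simultaneously controls the size of the exceptional set of $\psi_0$'s and the size of the bound on its complement. It is perhaps worth noting in the write-up that the argument never uses $t\neq0$: at $t=0$ the orthogonality $\hat P_{\nu'}\hat P_\nu=0$ makes the left-hand side of \eqref{2times} vanish identically, so the hypothesis $t\neq0$ merely signals that the statement is intended to have non-trivial content.
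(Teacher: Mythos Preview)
Your proof is correct and follows essentially the same route as the paper: compute the expectation of $\|\hat P_{\nu'}\psi_t\|^2$ using $\EEE\pr{\psi_0}=d_\nu^{-1}\hat P_\nu$, bound the resulting trace by $d_{\nu'}/d_\nu$, and pass from small expectation to ``small for most $\psi_0$''. Your write-up is in fact somewhat more explicit than the paper's, which simply says that a non-negative quantity with small average must be small in most cases; your use of Markov's inequality with $\varepsilon=\sqrt{d_{\nu'}/d_\nu}$ and your remark on the role of $t\neq 0$ are welcome elaborations.
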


\begin{proof}
Let $\hat U = e^{-i\hat Ht}$, let $d_\nu:= \dim \Hilbert_\nu$, and let $\Psi$ be a random vector with uniform distribution on $\SSS(\Hilbert_\nu)$. Then
\be\label{uniformrho}
\EEE \pr{\Psi}=d_{\nu}^{-1}\hat P_\nu\,,
\ee
so
\begin{align}
\EEE \| \hat P_{\nu'}\hat U\Psi\|^2 &=\tr(\hat P_{\nu'} \hat U d_\nu^{-1} \hat P_{\nu} \hat U^{-1})\\
&=d_\nu^{-1} \tr(\hat U^{-1} \hat P_{\nu'} \hat U \hat P_{\nu})\\
&\leq d_\nu^{-1} \tr(\hat U^{-1} \hat P_{\nu'} \hat U )\\
&= d_\nu^{-1} \tr( \hat P_{\nu'})\\
&= \frac{\dim \Hilbert_{\nu'}}{\dim \Hilbert_\nu}\ll 1\,.
\end{align}
If a non-negative quantity is small on average, it must be small in most cases, so the proposition follows.
\end{proof}

Note that the proof only used that $\Hilbert_{\nu'}$ and $\Hilbert_{\nu}$ are two mutually orthogonal subspaces of suitable dimensions; as a consequence, the statement remains true if we replace $\Hilbert_{\nu'}$ by the sum of all macro spaces with dimension less than that of $\Hilbert_\nu$. That is, if $\Hilbert_{\nu}$ is much bigger than all of the smaller macro spaces together (a case that is not unrealistic), then it is atypical for entropy to be lower at time $t$ than at time 0.

\section{Conclusions}
\label{sec:conclusions}

We have argued that entropy has nothing to do with the knowledge of observers. The Gibbs entropy \eqref{SGdef} is an efficient tool for computing entropy values in thermal equilibrium when applied to the Gibbsian equilibrium ensembles $\rho$, but the fundamental definition of entropy is the Boltzmann entropy \eqref{SBdef}. We have discussed the status of the two notions of entropy and of the corresponding two notions of thermal equilibrium, the ``ensemblist'' and the ``individualist'' view. Gibbs's ensembles are very useful, in particular as they allow the efficient computation of thermodynamic functions \citep{Gol19}, but their role can only be understood in Boltzmann's individualist framework. We have also outlined an extension of Boltzmann's framework to quantum mechanics by formulating a definition of the quantum Boltzmann entropy and, as Conjecture~\ref{conj:qB2ndlaw}, a statement of the second law of thermodynamics for it.

\bigskip

\noindent{\bf Acknowledgments.} We thank Jean Bricmont and David Huse for valuable discussions. The work of JLL was supported by the AFOSR under award number FA9500-16-1-0037. JLL thanks Stanislas Leibler for 
hospitality at the IAS.


\begin{thebibliography}{22}

\bibitem[Albert(2015)]{Alb15} D. Albert (2015):
	{\it After Physics}.
	Cambridge, MA: Harvard University Press

\bibitem[Allori et al.(2011)]{AGTZ11} V.~Allori, S.~Goldstein, R.~Tumulka, and N.~Zangh\`\i\  (2011):
	Many-Worlds and Schr\"odinger's First Quantum Theory.
	\textit{British Journal for the Philosophy of Science} \textbf{62(1)}: 1--27.
	\url{http://arxiv.org/abs/0903.2211}

\bibitem[Barbour et al.(2013)]{BKM13} J. Barbour, T. Koslowski, and F. Mercati (2013):
	A Gravitational Origin of the Arrows of Time.
	\url{http://arxiv.org/abs/1310.5167}

\bibitem[Barbour et al.(2014)]{BKM14} J. Barbour, T. Koslowski, and F. Mercati (2014):
	Identification of a gravitational arrow of time.
	{\it Physical Review Letters} {\bf 113}: 181101.
	\url{http://arxiv.org/abs/1409.0917}

\bibitem[Barbour et al.(2015)]{BKM15} J. Barbour, T. Koslowski, and F. Mercati (2015):
	Entropy and the Typicality of Universes.
	\url{http://arxiv.org/abs/1507.06498}

\bibitem[Bergmann and Lebowitz(1955)]{BL55} P. G. Bergmann and J. L. Lebowitz (1955):
	New approach to nonequilibrium processes.
	{\it Physical Review} {\bf 99}: 578

\bibitem[Boltzmann(1872)]{Bol1872} L. Boltzmann (1872): 
	Weitere Studien \"uber das W\"armegleichgewicht unter Gasmolek\"ulen. 
	{\it Sitzungsberichte der Akademie der Wissenschaften Wien} {\bf 66}:  275--370

\bibitem[Boltzmann(1898)]{Bol1898} L. Boltzmann (1898):
	{\it Vorlesungen \"uber Gastheorie.} Leipzig: Barth (Part I 1896, Part II 1898). 
	English translation by S.G. Brush: {\it Lectures on Gas Theory}.
	Berkeley: University of California Press (1964)

\bibitem[Bricmont(2019)]{Bri18} J. Bricmont (2019):
	Probabilistic Explanations and the Derivation of Macroscopic Laws. (In this volume.)

\bibitem[Callender(1999)]{Cal99} C. Callender (1999):
	Reducing thermodynamics to statistical mechanics: the case of entropy. 
	{\it Journal of Philosophy} {\bf 96}: 348--373

\bibitem[Carroll(2010)]{Car} S. M. Carroll (2010): 
	{\it From Eternity to Here.}
	New York: Dutton

\bibitem[Cercignani et al.(1994)]{CIP94} C. Cercignani, R. Illner, and M. Pulvirenti (1994): 
	{\it The Mathematical Theory of Dilute Gases}. 
	Berlin: Springer-Verlag

\bibitem[Cerino et al.(2016)]{CCCV16}
	L. Cerino, F. Cecconi, M. Cencini, and A. Vulpiani (2016):
	The role of the number of degrees of freedom and chaos in macroscopic irreversibility.
	{\it Physica A: Statistical Mechanics and its Applications} {\bf 442}: 486-497
	\url{http://arxiv.org/abs/1509.03823}

\bibitem[Clausius(1865)]{Cla65} R. Clausius (1865):
	\"Uber verschiedene, f\"ur die Anwendung bequeme Formen der 
	Haupt\-gleichungen der mechanischen W\"arme\-theorie.
	{\it Annalen der Physik und Chemie} {\bf 125}: 353--400.
	English translation by J. Tyndall: 
	On several convenient forms of the fundamental equations of the 
	mechanical theory of heat. 
	Pages 327--365 in R. Clausius: 
	{\it The Mechanical Theory of Heat, with its Applications to the Steam Engine and 
	to Physical Properties of Bodies.} 
	London: van Voorst (1867)

\bibitem[De Roeck et al.(2006)]{DRMN06} W. De Roeck, C. Maes, and K. Neto\v{c}n\'y (2006):  
	Quantum Macrostates, Equivalence of Ensembles and an $H$-theorem.
	\textit{Journal of Mathematical Physics} \textbf{47}: 073303.
	\url{http://arxiv.org/abs/math-ph/0601027}

\bibitem[Derrida(2007)]{Der07} B. Derrida (2007):
	Non-equilibrium steady states: fluctuations and large deviations of the density 
	and of the current.
	{\it Journal of Statistical Mechanics} {\bf 2007}: P07023.
	\url{http://arxiv.org/abs/cond-mat/0703762}

\bibitem[D\"urr et al.(2005)]{DGTZ05} D. D\"urr, S. Goldstein, R. Tumulka, and N. Zangh\`\i\ (2005):
	On the role of density matrices in Bohmian mechanics.
	{\it Foundations of Physics} {\bf 35}: 449--467.
	\url{http://arxiv.org/abs/quant-ph/0311127}

\bibitem[D\"urr and Teufel(2009)]{DT} D. D\"urr and S. Teufel (2009):
	{\it Bohmian mechanics.} 
	Berlin: Springer

\bibitem[Eddington(1931)]{Edd31} A. S. Eddington (1931): 
	The End of the World: From the Standpoint of Mathematical Physics. 
	{\it Nature} {\bf 127}: 447--453.  
	Reprinted on page 406 in D.~R.~Danielson (editor):  
	{\it The Book of the Cosmos: Imagining the Universe from Heraclitus to Hawking.} 
	Cambridge, MA: Perseus (2000)

\bibitem[Ehrenfest and Ehrenfest(1911)]{ehrenfest} P. Ehrenfest and T. Ehrenfest (1911):
	Begriffliche Grundlagen der statistischen Auf{}fassung in der Mechanik.
	In {\it Enzyklop\"adie der Mathematischen Wissenschaften}, Vol.
	IV-4, Art.\ 32. English translation by M. J. Moravcsik in
	P. Ehrenfest and T. Ehrenfest:
	{\it The Conceptual Foundations of the Statistical Approach in Mechanics.}
	Cornell University Press (1959)

\bibitem[Einstein(1914)]{Ein14} A. Einstein (1914): 
	Beitr\"age zur Quantentheorie.
	\textit{Deutsche Physikalische Gesellschaft. Verhandlungen} 
	\textbf{16}: 820--828. 
	English translation in 
	\textit{The Collected Papers of Albert Einstein}, Vol. 6, pages 20--26.
	Princeton University Press (1996) 

\bibitem[Everett(1957)]{Eve57} H. Everett (1957):  
	Relative State Formulation of Quantum Mechanics. 
	\textit{Reviews of Modern Physics} \textbf{29}: 454--462

\bibitem[Falcioni et al.(2007)]{FPPRV07} M. Falcioni, L. Palatella, S. Pigolotti, 
	L. Rondoni, and A. Vulpiani (2007):
	Initial growth of Boltzmann entropy and chaos in a large assembly of 
	weakly interacting systems.
	{\it Physica A: Statistical Mechanics and its Applications} {\bf 385}: 170--184
	\url{http://arxiv.org/abs/nlin/0507038}

\bibitem[Feynman(1965)]{Fey65} R. P. Feynman (1965): 
	\textit{The Character of Physical Law.} 
	M.I.T.~Press

\bibitem[Feynman(1995)]{Fey95} R. P. Feynman (1995): 
	\textit{Lectures on Gravitation.} 
	Edited by F.~B.~Mor\'inigo, W.~G.~Wagner, and B.~Hatfield. Westview Press

\bibitem[Gallavotti(2003)]{Gal} G. Gallavotti (2003):
	Nonequilibrium thermodynamics?
	\url{http://arxiv.org/abs/cond-mat/0301172}

\bibitem[Garrido et al.(2004)]{GGL04} P. Garrido, S. Goldstein, and J. L. Lebowitz (2004): 
	The Boltzmann Entropy for Dense Fluids Not in Local Equilibrium. 
	\text{Physical Review Letters} \textbf{92}: 050602.
	\url{http://arxiv.org/abs/cond-mat/0310575}

\bibitem[Gemmer et al.(2004)]{GMM04} J. Gemmer, G. Mahler, and M. Michel (2004):
    \textit{Quantum Thermodynamics:
      Emergence of Thermodynamic Behavior within Composite Quantum Systems}.
      Lecture Notes in Physics \textbf{657}. Berlin: Springer

\bibitem[Garibyan and Tegmark(2014)]{Teg} H. Gharibyan and M. Tegmark (2014):
	Sharpening the Second Law of Thermodynamics with the Quantum Bayes' Theorem.
	{\it Physical Review E} {\bf 90}: 032125.
	\url{http://arxiv.org/abs/1309.7349}

\bibitem[Gibbs(1902)]{Gib02} J. W. Gibbs (1902):
	{\it Elementary Principles in Statistical Mechanics}.
	New York: Scribner's Sons

\bibitem[Gogolin and Eisert(2016)]{GE15} C. Gogolin and J. Eisert (2016):
	Equilibration, thermalisation, and the emergence of statistical mechanics 
	in closed quantum systems.
	\textit{Reports on Progress in Physics} \textbf{79}: 056001.
	\url{http://arxiv.org/abs/1503.07538}

\bibitem[Goldstein(1998)]{Gol98} S. Goldstein (1998): 
	Quantum Theory Without Observers.
	\textit{Physics Today}, Part One: March, 42--46. 
	Part Two: April, 38--42.

\bibitem[Goldstein(2001)]{Gol99} S. Goldstein (2001):
	Boltzmann's Approach to Statistical Mechanics.
	Pages 39--54 in J. Bricmont, D. D\"urr, M. C. Galavotti, G.C. Ghirardi, 
	F. Petruccione, and N. Zangh\`\i\ (editors):
	\textit{Chance in Physics: Foundations and Perspectives}. 
	Lecture Notes in Physics {\bf 574\/}.
	Heidelberg: Springer-Verlag.
	\url{http://arxiv.org/abs/cond-mat/0105242}

\bibitem[Goldstein(2019)]{Gol19} S. Goldstein (2019):
	Individualist and Ensemblist Approaches to the Foundations of Statistical Mechanics.
	To appear in {\it The Monist}.

\bibitem[Goldstein et al.(2017a)]{GHLS17} S. Goldstein, D. A. Huse, J. L. Lebowitz, and P. Sartori (2017a):
	On the nonequilibrium entropy of large and small systems.
	\url{http://arxiv.org/abs/1712.08961}

\bibitem[Goldstein et al.(2015)]{GHLT15} S. Goldstein, D. A. Huse, J. L. Lebowitz, and R. Tumulka (2015):
	Thermal Equilibrium of a Macroscopic Quantum System in a Pure State.
	\textit{Physical Review Letters} \textbf{115}: 100402.
	\url{http://arxiv.org/abs/1506.07494}

\bibitem[Goldstein et al.(2017b)]{GHLT17} S. Goldstein, D. A. Huse, J. L. Lebowitz, and R. Tumulka (2017b): 
	Macroscopic and Microscopic Thermal Equilibrium. 
	{\it Annalen der Physik} {\bf 529}: 1600301.
	\url{http://arxiv.org/abs/1610.02312}

\bibitem[Goldstein and Lebowitz(2004)]{GL04} S. Goldstein and J. L. Lebowitz (2004):
	On the (Boltzmann) Entropy of Nonequilibrium Systems.
	\textit{Physica D} \textbf{193}: 53--66.
	\url{http://arxiv.org/abs/cond-mat/0304251}
	
\bibitem[Goldstein et al.(2010a)]{GLMTZ10} S. Goldstein, J. L. Lebowitz, C. Mastrodonato, R. Tumulka, and N. Zangh\`\i\ (2010a): 
	Approach to Thermal Equilibrium of Macroscopic Quantum Systems. 
	\textit{Physical Review E} \textbf{81}: 011109.
	\url{http://arxiv.org/abs/0911.1724}

\bibitem[Goldstein et al.(2006)]{GLTZ06} S. Goldstein, J. L. Lebowitz, R. Tumulka, and N. Zangh\`\i\ (2006):
	Canonical Typicality.
	\textit{Physical Review Letters} \textbf{96}: 050403.
	\url{http://arxiv.org/abs/cond-mat/0511091}

\bibitem[Goldstein et al.(2010b)]{GLTZ10} S. Goldstein, J. L. Lebowitz, R. Tumulka, and N. Zangh\`\i\ (2010b): 
	Long-Time Behavior of Macroscopic Quantum Systems. 
	\textit{European Physical Journal H} \textbf{35}: 173--200.
	\url{http://arxiv.org/abs/1003.2129}

\bibitem[Goldstein et al.(2016)]{GTZ16} S. Goldstein, R. Tumulka, and N. Zangh\`\i\ (2016): 
	Is the Hypothesis About a Low Entropy Initial State of the Universe Necessary 
	for Explaining the Arrow of Time?
	{\it Physical Review D} {\bf 94}: 023520.
	\url{http://arxiv.org/abs/1602.05601}

\bibitem[Griffiths(1994)]{Gri} R. Griffiths (1994): 
	Statistical Irreversibility: Classical and Quantum.
	Pages 147--159 in J. J. Halliwell, J. P\'erez-Mercader, and W. H. Zurek (editors): 
	\textit{Physical Origin of Time Asymmetry}.
	Cambridge University Press

\bibitem[Jaynes(1965)]{Jay65} E. T. Jaynes (1965): 
	Gibbs vs Boltzmann Entropies.
	\textit{American Journal of Physics} {\bf 33}: 391--398

\bibitem[Kac(1956)]{Kac} M. Kac (1956): 
	Foundations of Kinetic Theory.
	Pages 171--197 in J. Neyman (editor),
	{\it Proceedings of the Third Berkeley Symposium on Mathematical Statistics and 
	Probability}, vol. III.
	Berkeley: University of California Press

\bibitem[Kaufman et al.(2016)]{KTL16} A. M. Kaufman, M. E. Tai, A. Lukin, M. Rispoli, R. Schittko, 
	P. M. Preiss, and M. Greiner (2016):
	Quantum thermalization through entanglement in an isolated many-body system.
	{\it Science} {\bf 353}: 794--800.
	\url{http://arxiv.org/abs/1603.04409}

\bibitem[Kemble(1939)]{Kem39} E. C. Kemble (1939): 
	Fluctuations, Thermodynamic Equilibrium and Entropy.
	\textit{Physical Review} \textbf{56}: 1013--1023

\bibitem[Khinchin(1941)]{Khi} A. I. Khinchin (1941): 
	{\it Matemati\v{c}eskie osnovanija statisti\v{c}eskoj mechaniki}.
	Moscow.
	English translation by G. Gamow: 
	{\it Mathematical foundations of statistical mechanics}.
	New York: Dover (1949)

\bibitem[Krylov(1979)]{Kry} N. S. Krylov (1979): 
	\textit{Works on the foundations of statistical physics.}
	Princeton: University Press

\bibitem[Lanford(1973)]{Lan73} O. E. Lanford (1973): 
	Entropy and Equilibrium States in Classical Statistical Mechanics. 
	Pages 1--113 in A. Lenard (editor): 
	\textit{Statistical Mechanics and Mathematical Problems},
	Lecture Notes in Physics vol.~\textbf{2}. 
	Berlin: Springer-Verlag

\bibitem[Lanford(1975)]{Lan75} O. E. Lanford (1975):
    Time Evolution of Large Classical Systems. 
    Pages 1--111 in J. Moser (editor):
    \textit{Dynamical Systems, Theory and Applications},
    Lecture Notes in Physics vol.~\textbf{38}.
    Berlin: Springer-Verlag

\bibitem[Lanford(1976)]{Lan76} O. E. Lanford (1976):
	On a Derivation of the Boltzmann Equation.
	{\it Ast\'erisque} {\bf 40}: 117--137.
	Reprinted in J.~L.~Lebowitz and E.~W.~Montroll: 
	{\it Nonequilibrium Phenomena---The Boltzmann Equation}, 
	North-Holland (1983)

\bibitem[Lazarovici(2018)]{Laz} D. Lazarovici (2018):
	On Boltzmann vs. Gibbs and the Equilibrium in Statistical Mechanics.
	\url{http://arxiv.org/abs/1809.04643}

\bibitem[Lebowitz(2008)]{L07} J. L. Lebowitz (2008):
	From Time-symmetric Microscopic Dynamics to Time-asymmetric
	Macroscopic Behavior: An Overview.
	Pages 63--88 in G. Gallavotti, W. L. Reiter, J. Yngvason (editors):
	\textit{Boltzmann's Legacy}.
	Z\"urich: European Mathematical Society.
	\url{http://arxiv.org/abs/0709.0724}

\bibitem[Lebowitz and Maes(2003)]{LM03} J. L. Lebowitz and C. Maes (2003): 
	Entropy---A Dialogue. 
	Pages 269--273 in A. Greven, G. Keller, and G. Warnecke (editors): 
	\textit{On Entropy.} 
	Princeton University Press

\bibitem[Loewer(2019)]{Loe19} B. Loewer (2019):
	The Mentaculus Vision. (In this volume.)

\bibitem[Mackey(1989)]{Mac89} M. C. Mackey (1989): 
	The dynamic origin of increasing entropy.
	\textit{Reviews of Modern Physics} \textbf{61}: 981--1015

\bibitem[Ogata(2013)]{Og} Y. Ogata (2013):
	Approximating macroscopic observables in quantum spin systems with 
	commuting matrices.
	\textit{Journal of Functional Analysis} \textbf{264}: 2005--2033.
	\url{http://arxiv.org/abs/1111.5933}

\bibitem[Onsager(1931)]{Ons31} L. Onsager (1931):
	Reciprocal relations in irreversible processes, I. 
	{\it Physical Review} {\bf 37}: 405--426

\bibitem[Pauli(1928)]{Pau} W. Pauli (1928):
	\"Uber das $H$-Theorem vom Anwachsen der Entropie vom Standpunkt 
	der neuen Quantenmechanik. 
	Pages 30--45 in P. Debye (editor):
	\textit{Probleme der modernen Physik: Arnold Sommerfeld zum 60.\ Geburtstage 
	gewidmet von seinen Sch\"ulern.} 
	Leipzig: Hirzel

\bibitem[O.~Penrose(1970)]{OP} O. Penrose (1970):
	{\it Foundations of Statistical Mechanics}.
	Oxford: Pergamon

\bibitem[Penrose(1979)]{Pen79} R. Penrose (1979): 
	Singularities and time-asymmetry. 
	Pages 581--638 in S. Hawking and W. Israel (editors): 
	{\it General Relativity: An Einstein centenary survey}. 
	Cambridge University Press

\bibitem[Penrose(1989)]{Pen} R. Penrose (1989): 
	\textit{The Emperor's New Mind.} 
	Oxford University Press

\bibitem[Popescu et al.(2006)]{PSW06} S. Popescu, A. J. Short, and A. Winter (2006):
	Entanglement and the foundation of statistical mechanics.
	\textit{Nature Physics} \textbf{21(11)}: 754--758

\bibitem[Popper(1976)]{Pop76} K. R. Popper (1976): 
	{\it Unended Quest.}
	Glasgow: Collins

\bibitem[Ruelle(2007)]{Rue} D. Ruelle (2007): personal communication

\bibitem[Safranek et al.(2017)]{SDA17} D. Safranek, J. M. Deutsch, and A. Aguirre (2017):
	Quantum coarse-grained entropy and thermodynamics.
	\url{http://arxiv.org/abs/1707.09722}

\bibitem[Safranek et al.(2018)]{SDA18} D. Safranek, J. M. Deutsch, and A. Aguirre (2018):
	Quantum coarse-grained entropy and thermalization in closed systems.
	\url{http://arxiv.org/abs/1803.00665}

\bibitem[Shannon(1948)]{Sha} C. E. Shannon (1948):
	A Mathematical Theory of Communication. 
	{\it Bell System Technical Journal} {\bf 27}: 379--423

\bibitem[Tasaki(2016)]{Tas15b} H. Tasaki (2016):
	Typicality of thermal equilibrium and thermalization in isolated macroscopic
	quantum systems.
	\textit{Journal of Statistical Physics} \textbf{163}: 937--997.
	\url{http://arxiv.org/abs/1507.06479}

\bibitem[Tolman(1938)]{Tol38} R. C. Tolman (1938):
	\textit{The Principles of Statistical Mechanics.}
	Oxford University Press

\bibitem[von Neumann(1927)]{vN27c} J. von Neumann (1927): 
	Thermodynamik quantenmechanischer Gesamtheiten. 
	\textit{G\"ottinger Nachrichten} 273--291 (11 November 1927)

\bibitem[von Neumann(1929)]{vN29} J. von Neumann (1929):
      Beweis des Ergodensatzes und des $H$-Theorems in der neuen Mechanik. 
      \textit{Zeitschrift f\"ur Physik} \textbf{57}: 30--70. 
      English translation in
      \textit{European Physical Journal H} \textbf{35}: 201--237 (2010)
	  \url{http://arxiv.org/abs/1003.2133}

\bibitem[Wallace(2012)]{Wal12} D. Wallace (2012):
	{\it The Emergent Multiverse.} Oxford University Press

\bibitem[Wallace(2019)]{Wal18} D. Wallace (2019):
	The Necessity of Gibbsian Statistical Mechanics. (In this volume.)
	\url{http://philsci-archive.pitt.edu/15290/}


\end{thebibliography}
\end{document}